\documentclass[12pt,a4paper]{article}

\usepackage[margin=1in]{geometry}
\usepackage{amsmath}
\usepackage{amsthm}
\usepackage{amsfonts}
\usepackage{amssymb}
\usepackage{graphicx}
\usepackage{xcolor}
\usepackage[hidelinks]{hyperref}
\usepackage{cleveref}
\usepackage{subfigure}
\usepackage{caption}
\usepackage{algorithm}
\usepackage{algorithmicx}
\usepackage{algpseudocode}
\usepackage{multirow,booktabs,bigstrut}
\usepackage{adjustbox}
\usepackage{indentfirst}

\makeatletter
\let\grfm@legacysubfigure\subfigure
\renewcommand{\subfigure}{%
  \def\grfm@subfigureenvironment{subfigure}%
  \ifx\@currenvir\grfm@subfigureenvironment
    \expandafter\grfm@subfigurebegin
  \else
    \expandafter\grfm@legacysubfigure
  \fi}
\newcommand{\grfm@subfigurebegin}[2][b]{%
  \minipage[#1]{#2}%
  \captionsetup{type*=subfigure}}

\makeatother

\newtheorem{theorem}{Theorem}
\theoremstyle{remark}
\newtheorem{remark}{Remark}

\title{GR-FM: Geometrically Regularized Flow Matching for SDF-Based Medical Image Segmentation}

\author{
Yuxin Ai\thanks{School of Mathematics, Harbin Institute of Technology,
Harbin, China
(\href{mailto:24B312010@stu.hit.edu.cn}{\texttt{24B312010@stu.hit.edu.cn}},
\href{mailto:mathgzc@hit.edu.cn}{\texttt{mathgzc@hit.edu.cn}},
and \href{mailto:zhangdazhi@hit.edu.cn}{\texttt{zhangdazhi@hit.edu.cn}}).}
\and Zhichang Guo\footnotemark[1]
\and Fanghui Song\footnotemark[1]\thanks{Corresponding author
(\href{mailto:fanghuisong@stu.hit.edu.cn}{\texttt{fanghuisong@stu.hit.edu.cn}}).}
\and Dazhi Zhang\footnotemark[1]
}

\hypersetup{
  pdftitle={GR-FM: Geometrically Regularized Flow Matching for SDF-Based Medical Image Segmentation},
  pdfauthor={Y. Ai, Z. Guo, F. Song, and D. Zhang}
}

\begin{document}

\maketitle

\begin{abstract}
Medical image segmentation remains challenging in terms of accurate boundary localization and complex-structure preservation, as target regions often exhibit weak boundaries, fine-grained structures, and irregular shapes, while high-quality images and precise annotations are usually limited. Existing generative segmentation approaches based on Flow Matching mainly learn velocity fields in the state space but lack explicit constraints on the spatial regularity and distance-field properties of the recovered representation. This limitation may lead to spatial oscillations, boundary displacement, and discontinuities in fine structures. To address these issues, we propose an image-conditioned Geometrically Regularized Flow Matching framework, termed GR-FM, which formulates segmentation as a continuous probability transport process from an initial distribution to a target implicit representation distribution. Instead of directly modeling binary masks, GR-FM adopts the SDF to describe the target structure, where each pixel is represented by its signed distance to the object boundary, resulting in a continuous geometric field. Flow Matching and ordinary differential equations are then employed to achieve deterministic and efficient distribution transport. The training objective further incorporates a biharmonic regularization term and an Eikonal constraint to enhance the spatial smoothness, structural consistency, and distance-field characteristics of the recovered representation. Moreover, we analyze the continuous transport process under geometric constraints and investigate the evolution of the zero level set. Experiments conducted on the MoNuSeg, GlaS, and DRIVE datasets demonstrate that GR-FM achieves competitive performance in both region overlap and boundary accuracy, reduces performance variation, and maintains stable segmentation results with only a small number of integration steps.
\end{abstract}

\section{Introduction}

Medical image segmentation aims to identify organs and lesions at the pixel level from different imaging modalities. It provides essential support for clinical diagnosis, treatment planning, and postoperative assessment~\cite{azad2024medical, gao2025medical, xu2024advances}. It has been widely applied to tumor delineation~\cite{sarkar2025deep}, vessel extraction~\cite{cai2013vessel}, organ localization, and lesion analysis~\cite{wasserthal2023totalsegmentator}. However, target regions in medical images often have weak boundaries, fine structures, and highly irregular shapes. Medical imaging also faces variations in image quality and a limited number of high-quality images. These factors make it difficult to locate unclear boundaries, preserve the continuity of thin structures, and recover complex anatomical shapes. They can also reduce model generalization and structural stability across different imaging conditions. Medical image segmentation therefore requires not only high pixel-wise accuracy but also reliable geometric representation to preserve stable boundaries and anatomical structures.

Signed distance function (SDF) represents a target region as a continuous scalar field and defines its boundary by the zero level set. It transforms discrete mask prediction into the recovery of a continuous implicit geometric representation. Compared with binary masks, SDFs encode boundary locations, inside--outside relationships, and spatial distance information. They are therefore well suited to representing weak boundaries, fine structures, and irregular shapes. Flow Matching~\cite{lipman2022flow} learns a continuous-time velocity field that transports a simple prior distribution to a target data distribution. It formulates segmentation as a deterministic transport process governed by an ordinary differential equation. Compared with diffusion models based on stochastic reverse denoising, continuous flow models can provide stable inference with fewer sampling steps and allow structural priors to be included in the training objective. FlowSDF~\cite{bogensperger2025flowsdf} further combines SDFs with conditional continuous flows, providing a natural framework that connects continuous generative modeling with implicit geometric representation.

Although FlowSDF introduces SDF representations into continuous flow modeling and learns conditional transport from a prior distribution to a target SDF distribution, its optimization objective is still centered on velocity matching in the state space. It does not explicitly constrain the spatial derivatives of the generated SDF. Minimizing the velocity regression error only encourages the model to approximate the reference transport direction. It does not directly ensure that the terminal SDF has sufficient higher-order spatial regularity or satisfies the Eikonal property of an ideal distance field. The generated field may therefore contain distorted spatial gradients and local high-frequency oscillations, which can cause boundary shifts, irregular contours, and breaks in fine structures. Explicitly incorporating higher-order spatial regularity and distance-field properties into the training objective is thus important for improving SDF recovery and the geometric reliability of segmentation while preserving the efficient transport mechanism of Flow Matching.

We propose a higher-order geometry-constrained continuous probability flow framework that incorporates geometric priors into flow-field learning. The proposed framework significantly improves the stability and geometric fidelity of the transport process while retaining the efficiency of flow-based segmentation. The main contributions of this work are summarized as follows:

\begin{itemize}

\item 
We introduce Geometrically Regularized Flow Matching (GR-FM), a geometry-aware continuous transport formulation for SDF-based medical image segmentation. The proposed framework integrates higher-order geometric regularization into the flow matching dynamics, providing explicit control over boundary stability, shape preservation, and topological consistency during SDF evolution.

\item 
We adopt SDFs as the target representation for segmentation, transforming discrete mask prediction into the transport of continuous implicit geometric representations. By introducing an Eikonal constraint, the proposed framework preserves the distance-function property throughout SDF evolution, thereby improving the geometric interpretability of intermediate states and enhancing the boundary localization stability of final segmentation results.

\item 
We provide a theoretical analysis of the continuous transport process under high-order geometric constraints. Specifically, we characterize the transport properties of zero level sets during continuous flow evolution, thereby providing theoretical support for the proposed framework.
\end{itemize}
\section{Related Work}

\subsection{SDF-based Segmentation}
\label{sec:sdf}

Let the image domain be $\Omega\subset\mathbb{R}^2$, the ground-truth mask $y:\Omega\to\{0,1\}$ denote the binary label, the region to be segmented $\Omega_{\mathrm{in}}\subset\Omega$, and its boundary $\Gamma=\partial\Omega_{\mathrm{in}}$. An SDF embeds the discrete mask into a continuous real-valued space. The signed distance $d:\Omega\to\mathbb{R}$ is defined as
\begin{equation}
    d(\mathbf{x}) =
    \begin{cases}
        -\operatorname{dist}(\mathbf{x},\Gamma), & \mathbf{x}\in\Omega_{\mathrm{in}}, \\[4pt]
        \operatorname{dist}(\mathbf{x},\Gamma),  & \mathbf{x}\notin\Omega_{\mathrm{in}},
    \end{cases}
    \label{eq:sdf-def}
\end{equation}
where $\operatorname{dist}(\mathbf{x},\Gamma)$ denotes the Euclidean distance from point $\mathbf{x}$ to the boundary $\Gamma$. By definition, the zero level set $\{\mathbf{x}\in\Omega:d(\mathbf{x})=0\}$ coincides exactly with the target boundary $\Gamma$, while $d(\mathbf{x})<0$ and $d(\mathbf{x})>0$ correspond to the interior and exterior of region $\Omega_{\mathrm{in}}$, respectively. Applying Eq.~\eqref{eq:sdf-def} pixel-wise to mask $y$ yields its signed distance representation.

To focus the model more on geometric information near the boundary neighborhood $\Gamma$, the distance values can be truncated at a threshold $\delta>0$. The truncated SDF is denoted by $s$, and its value at pixel location $\mathbf{x}_{ij}$ is defined as
\begin{equation}
    s(\mathbf{x}_{ij}) =
    \begin{cases}
        -\min\Bigl\{d(\mathbf{x}_{ij}),\;\delta\Bigr\},
        & \mathbf{x}_{ij}\in\Omega_{\mathrm{in}}, \\[6pt]
        \min\Bigl\{d(\mathbf{x}_{ij}),\;\delta\Bigr\},
        & \mathbf{x}_{ij}\in\Omega\setminus\Omega_{\mathrm{in}}, \\[6pt]
        0, & \mathbf{x}_{ij}\in\Gamma,
    \end{cases}
    \label{eq:sdf-trunc}
\end{equation}
where $\mathbf{x}_{ij}$ denotes the spatial coordinates of pixel $(i,j)$. The truncation operation attenuates the influence of large distance values in regions far from the boundary on model training, thereby guiding the model to focus more on local geometric structures near the boundary.

At inference, the model obtains the predicted SDF map $s_{\mathrm{pred}}$ through ODE numerical integration. Since the final segmentation results are evaluated and visualized in the form of binary masks, the continuous SDF representation is converted into discrete labels using a threshold $\tau$. Specifically, the predicted mask is obtained as

\begin{equation}
    y_{\mathrm{pred}}(\mathbf{x}_{ij})
    =
    \mathbf{1}_{\{s_{\mathrm{pred}}(\mathbf{x}_{ij})\leq \tau\}}.
    \label{eq:binary-infer-thresh}
\end{equation}

where $\tau$ denotes the decision threshold for binarization. In practice, $\tau$ can be selected within an appropriate range according to the characteristics of the target SDF. Through this transformation, the continuous SDF predicted by the model is converted into the final binary segmentation mask for subsequent evaluation and visualization.

\subsection{Continuous Transport Modeling for Medical Image Segmentation}
\label{sec:preliminaries}

Recent studies have adapted continuous flow models to dense prediction tasks. U-Net~\cite{ronneberger2015u} established a widely used encoder--decoder architecture for medical image segmentation. It recovers spatial details through hierarchical feature extraction and skip connections. Subsequent U-Net variants introduced residual connections, nested skip pathways, and multiscale feature fusion. These designs further improved local structure representation~\cite{huang2020unet,ibtehaz2020multiresunet,li2019residual,zhou2018unet++}. Attention mechanisms and Transformers were later integrated into U-Net-based architectures. They improve target localization and capture long-range dependencies~\cite{cao2022swin,chen2021transunet,liu2021swin,oktay2018attention}. In parallel, the DeepLab series developed another influential segmentation framework. It uses atrous convolution and atrous spatial pyramid pooling to capture multiscale context while preserving spatial resolution~\cite{chen2014semantic,chen2017deeplab,chen2017rethinking}. More recently, generative models have also been introduced into medical image segmentation. They provide new ways to model target distributions and the formation of segmentation structures.

Regularization complements architectural design by improving generalization and spatial reliability. Santos~\cite{santos2022avoiding} grouped regularization methods for convolutional neural networks into data augmentation, internal network modifications, and label regularization. General overfitting control, however, does not explicitly model spatial relations between pixels. Liu, Wang, and Tai~\cite{liu2022deep} incorporated spatial smoothness, region-volume, and star-shape priors into the softmax classification layer. Zhang and Guo~\cite{zhang2023new} proposed directional connectivity-enhancing regularization for fine structures in low-contrast images. Their method improves the continuity of cracks and retinal vessels, although the gain in sensitivity may be accompanied by a slight reduction in specificity. Higher-order geometric regularization provides another way to control spatial structure. Curvature, Laplacian, and biharmonic-type terms can suppress local oscillations and improve boundary smoothness and structural continuity~\cite{chen2020learning,droske2004level,lenz2023complementary}.

This deterministic formulation can be expressed in a common mathematical form. A two-dimensional medical image is represented as $\mathbf{I}\in\mathbb{R}^{M\times N\times 2}$, and its label is represented as $y\in\{0,1,\ldots,C-1\}^{M\times N}$, where $C$ denotes the number of classes. Given a training set $\mathcal{D}=\{(\mathbf{I}^{(n)},y^{(n)})\}_{n=1}^{N}$, a segmentation network learns parameters $\theta$ such that the prediction $\hat{y}=f_{\theta}(\mathbf{I})$ approaches the ground-truth annotation $y$. The optimization objective is written as
\[
\min_{\theta}
\mathbb{E}_{(\mathbf{I},y)\sim\mathcal{D}}
\left[
\mathcal{L}_{\mathrm{seg}}\bigl(f_{\theta}(\mathbf{I}),y\bigr)
\right].
\]

These end-to-end methods usually produce the final segmentation mask through a single forward mapping. They learn a direct relation between the input image and the target region. However, they do not explicitly describe how the target structure is formed from an initial state. This limitation motivates generative formulations that model both the conditional target distribution and the formation process.

Among these formulations, generative adversarial networks~\cite{haoqi2020cgan, xue2018segan, zhang2022comparative} learn the joint distribution of images and segmentation masks to improve structural consistency. Diffusion models~\cite{ho2020denoising, liu2025diffusion} represent mask generation as a gradual reverse denoising process and refine segmentation structures through multiple iterations. MedSegDiff~\cite{wu2024medsegdiff} applies conditional diffusion to medical image segmentation. Later methods combine Transformer and U-Net features~\cite{ajith2025transunet, chen2024hidiff, xing2023diff} to improve the modeling of global relations and multiscale structures. Although diffusion models can represent complex target distributions, their stochastic initialization and multistep reverse sampling increase inference cost and may introduce prediction uncertainty.

Flow-based generative models~\cite{kobyzev2020normalizing, lipman2022flow} use ordinary differential equations to describe continuous changes between probability distributions, providing a way to reduce iterative sampling and model segmentation evolution explicitly. Conditioned on an input image $\mathbf{I}$, an initial state sampled from a simple prior distribution $p_0$ is transported through a time-dependent continuous transformation toward the distribution $p_1$ of target segmentation representations. Flow Matching (FM) constructs deterministic transport from $p_0$ to $p_1$ by learning the time-dependent velocity field that drives this distribution change. Let $\psi$ denote the corresponding flow transformation, which satisfies
\begin{equation}
    p_1 = [\psi]_\# p_0,
    \label{eq:pushforward}
\end{equation}
and defines a continuous mapping between the two distributions. From a learning perspective, $\psi$ maps a sample $\mathbf{z}_0\sim p_0$ to $\psi(\mathbf{z}_0)\sim p_1$. In the conditional FM framework, this transformation is parameterized as a time-dependent flow map
\begin{equation}
    \psi_t:[0,1]\times\mathbb{R}^d \rightarrow \mathbb{R}^d,
    \qquad t\in[0,1].
    \label{eq:flow-map-def}
\end{equation}
Let $\mathbf{z}_t\in\mathbb{R}^d$ denote the intermediate state at time $t$. Its evolution is governed by a time-dependent vector field:
\begin{equation}
    \frac{d\mathbf{z}_t}{dt} = \mathbf{v}_\theta(\mathbf{z}_t,t\mid c),
    \label{eq:fm-ode}
\end{equation}
where $\mathbf{v}_\theta$ is the velocity field learned by the parameterized network and $c$ denotes the conditional information. The flow satisfies $\psi_t(\mathbf{z}_0)=\mathbf{z}_t$, and its terminal state is obtained by integration:
\begin{equation}
    \psi(\mathbf{z}_0)=\psi_1(\mathbf{z}_0)
    =\mathbf{z}_0+\int_0^1\mathbf{v}_\theta(\mathbf{z}_t,t\mid c)\,dt.
    \label{eq:flow-integral}
\end{equation}

During training, the complete dynamic trajectory does not need to be solved explicitly. A conditional probability path is constructed between a source endpoint $\mathbf{z}_0$ and a target endpoint $\mathbf{z}_1$. Let $p_t(\mathbf{z}\mid\mathbf{z}_0,\mathbf{z}_1)$ denote this conditional path and $q(\mathbf{z}_0,\mathbf{z}_1)$ the endpoint coupling distribution. The corresponding marginal distribution is
\begin{equation}
    p_t(\mathbf{z})
    =\int p_t(\mathbf{z}\mid\mathbf{z}_0,\mathbf{z}_1)
    q(\mathbf{z}_0,\mathbf{z}_1)\,d\mathbf{z}_0\,d\mathbf{z}_1.
    \label{eq:marginal-path}
\end{equation}
A linear reference path between $\mathbf{z}_0\sim p_0$ and $\mathbf{z}_1\sim p_1$ is commonly used:
\begin{equation}
    \mathbf{z}_t=(1-t)\mathbf{z}_0+t\mathbf{z}_1+\sigma\boldsymbol{\epsilon},
    \qquad \boldsymbol{\epsilon}\sim\mathcal{N}(\mathbf{0},\mathbf{I}),
    \label{eq:linear-bridge}
\end{equation}
with the conditional vector field $\mathbf{v}_t(\mathbf{z}_t\mid\mathbf{z}_0,\mathbf{z}_1)=\mathbf{z}_1-\mathbf{z}_0$. The linear term provides deterministic interpolation between the source and target distributions. The Gaussian term controls the randomness of the conditional path without changing the target velocity. The induced density path and vector field satisfy the continuity equation
\begin{equation}
    \frac{d}{dt}p_t(\mathbf{z})
    +\operatorname{div}\bigl(p_t(\mathbf{z})\mathbf{v}_t(\mathbf{z})\bigr)=0,
    \label{eq:continuity}
\end{equation}
which ensures conservation of probability mass during transport.

Distribution transport can then be reformulated as regression on the target velocity field. The marginal FM objective is
\begin{equation}
    \mathcal{L}_{\mathrm{FM}}(\theta)
    :=\mathbb{E}_{t\sim\mathcal{U}_{[0,1]}}
    \mathbb{E}_{\mathbf{z}\sim p_t}
    \left[
        \frac{1}{2}\left\|
        \mathbf{v}_{\theta,t}(\mathbf{z})-\mathbf{v}_t(\mathbf{z})
        \right\|_2^2
    \right],
    \label{eq:fm-loss}
\end{equation}
where $\mathbf{v}_t$ is the marginal vector field associated with $p_t$. Since $p_t$ and $\mathbf{v}_t$ are difficult to sample and construct directly, conditional FM regresses the conditional vector field along the conditional path:
\begin{equation}
    \mathcal{L}_{\mathrm{CFM}}(\theta)
    =\mathbb{E}_{t,\mathbf{z}_0,\mathbf{z}_1,\mathbf{z}_t}
    \left[
        \frac{1}{2}\left\|
        \mathbf{v}_{\theta,t}(\mathbf{z}_t)
        -\mathbf{v}_t(\mathbf{z}_t\mid\mathbf{z}_0,\mathbf{z}_1)
        \right\|_2^2
    \right].
    \label{eq:cfm-loss}
\end{equation}
The objectives in Eqs.~\eqref{eq:fm-loss} and~\eqref{eq:cfm-loss} have the same optimum. During training, $\mathbf{z}_t$ is sampled directly from Eq.~\eqref{eq:linear-bridge}, and the target velocity is regressed without simulating Eq.~\eqref{eq:fm-ode}. During inference, deterministic numerical integration of the learned ODE transports samples from $p_0$ to $p_1$.

Recent studies have adapted continuous flow models to dense prediction tasks. SemFlow~\cite{wang2024semflow} combines semantic mask prediction and image synthesis within a rectified-flow framework. FlowSDF~\cite{bogensperger2025flowsdf} uses signed distance transforms as continuous targets for conditional flow matching. This design enables the target boundary to be represented as an implicit interface. Diff2Flow~\cite{schusterbauer2025diff2flow} transfers knowledge from diffusion models to flow-based models and reduces the number of sampling steps. These studies focus mainly on transport design, target representation, or sampling efficiency. The spatial differential properties of the recovered implicit field receive less direct attention.

Segmentation objectives regularize predictions at different levels. Cross-entropy provides pixel-wise supervision, while Dice and IoU losses measure regional overlap. Boundary-aware losses further penalize contour discrepancies. SDF-based prediction also permits differential constraints on the recovered field. First-order gradient or Eikonal constraints regulate distance-field properties, whereas curvature- and Laplacian-based terms control higher-order spatial variations. These objectives are usually formulated independently of continuous transport dynamics. Their coupling with velocity regression remains less explored. A unified formulation should therefore connect flow-field learning with the spatial geometry of the recovered SDF.
\section{Proposed Method}
\label{sec:arch-loss}
Building on Sections~\ref{sec:preliminaries} and~\ref{sec:sdf}, this section presents the implementation of the proposed geometrically regularized flow matching model for medical image segmentation. We first formulate segmentation as SDF-based flow matching, then introduce the high-order geometric regularization and image-guided network architecture, and finally describe the inference strategy. The mechanistic properties of the geometry-constrained transport process are analyzed in Section~\ref{sec:loss-mech}.

\subsection{SDF-based Flow Matching}

Consider a training dataset $$\mathcal{D}={(\mathbf{I}^{(n)}, s^{(n)})}_{n=1}^{N}$$, where each training sample consists of an input image $\mathbf{I}^{(n)}$ and its corresponding SDF label $s^{(n)}$. The SDF label is generated from the ground-truth mask $y^{(n)}$ using the distance transform defined in Eq.~\eqref{eq:sdf-def}. Within the flow matching framework of Section~\ref{sec:preliminaries}, the segmentation task is formulated as a continuous transport process in the SDF space. The goal is to learn a conditional vector field $\mathbf{v}_\theta(\mathbf{s}_t,t,\mathbf{I})$ that satisfies the following ordinary differential equation:
\begin{equation}
    \frac{d}{dt}\Psi_t(\mathbf{s}\mid\mathbf{I})
    =
    \mathbf{v}_\theta
    \bigl(
    \Psi_t(\mathbf{s}\mid\mathbf{I}),
    t,
    \mathbf{I}
    \bigr).
    \label{eq:seg-ode}
\end{equation}

During training, the input image $\mathbf{I}$ serves as conditional information that guides the SDF from the initial Gaussian distribution $\mathbf{s}_0\sim p_0=\mathcal{N}(\mathbf{0},\mathbf{I})$ toward the target SDF field $s$. Intermediate states $\mathbf{s}_t$ are sampled from the conditional probability path
$p_t(\mathbf{s}_t\mid\mathbf{s}_0,s,\mathbf{I})$,
thereby constructing supervision signals in continuous time.

Specifically, let
$t\sim\mathcal{U}[0,1]$
and
$\boldsymbol{\epsilon}\sim\mathcal{N}(\mathbf{0},\mathbf{I})$.
The conditional probability path is defined as

\begin{equation}
\mathbf{s}_t
=
\mu_t+\sigma_t\boldsymbol{\epsilon},
\qquad
\mu_t=t\,s,
\qquad
\sigma_t
=
1-(1-\sigma_{\min})t,
\label{eq:affine-path}
\end{equation}

where $\sigma_{\min}\approx0$ denotes the minimum diffusion scale. Expanding Eq.~(\ref{eq:affine-path}) yields

\begin{equation}
\mathbf{s}_t
=
t\,s_1
+
\left(
1-(1-\sigma_{\min})t
\right)
\boldsymbol{\epsilon}.
\label{eq:path-expand}
\end{equation}

Rearranging Eq.~(\ref{eq:path-expand}) gives the Gaussian noise variable as

\begin{equation}
\boldsymbol{\epsilon}
=
\frac{\mathbf{s}_t-t\,s_1}
{1-(1-\sigma_{\min})t}.
\label{eq:epsilon}
\end{equation}

Taking the derivative of Eq.~(\ref{eq:path-expand}) with respect to time gives

\begin{equation}
\frac{\mathrm d\mathbf{s}_t}{\mathrm dt}
=
s_1
-
(1-\sigma_{\min})
\boldsymbol{\epsilon}.
\label{eq:path-derivative}
\end{equation}

Substituting Eq.~(\ref{eq:epsilon}) into Eq.~(\ref{eq:path-derivative}) yields the conditional target velocity

\begin{align}
\mathbf{u}_t
(\mathbf{s}_t\mid s_1,I)
&=
s_1
-
(1-\sigma_{\min})
\frac{\mathbf{s}_t-t\,s_1}
{1-(1-\sigma_{\min})t}
\nonumber\\
&=
\frac{
s_1-(1-\sigma_{\min})\mathbf{s}_t
}
{1-(1-\sigma_{\min})t}.
\label{eq:target-velocity}
\end{align}

Accordingly, the conditional flow matching loss is defined as
\begin{equation}
    \mathcal{L}_{\mathrm{CFM}}
    = \mathbb{E}_{t,\mathbf{I},s,\mathbf{s}_t}
    \left[
        \bigl\|
            \mathbf{v}_\theta(\mathbf{s}_t,t,\mathbf{I})
            - \mathbf{u}_t
        \bigr\|^2
    \right].
    \label{eq:lcfm}
\end{equation}

\subsection{The proposed GR-FM model}
\label{sec:loss}

Building on the FM formulation and SDF representation introduced in Sections~\ref{sec:preliminaries} and~\ref{sec:sdf}, respectively, let the input image be $\mathbf{I}$ and its corresponding SDF label be $s$. As shown in Fig.~\ref{fig:architecture}, a U-Net is used to parameterize the conditional vector field $\mathbf{v}_\theta(\mathbf{s}_t,t,\mathbf{I})$, which progressively transports noisy intermediate states toward the target SDF along a continuous-time trajectory. However, the vector-field regression loss in Eq.~\eqref{eq:lcfm} mainly enforces consistency between the predicted and target velocities and does not directly ensure that the recovered result has desirable spatial geometry. Even with a small velocity regression error, local prediction errors may accumulate through spatial differentiation and discrete integration, leading to local oscillations, rough boundaries, spurious small structures, and uneven gradient distributions.

The geometric information encoded by an SDF is determined not only by its value at each pixel, but also by its spatial gradients, high-order derivatives, and the local shape of its zero level set. Velocity matching in the state space alone cannot adequately control these spatial properties. Applying geometric constraints directly to the noisy state $\mathbf{s}_t$ is also unsuitable because it still contains time-dependent noise, and its spatial variations do not accurately represent the geometry of the target SDF. To apply geometric regularization to a representation closer to the target SDF, an implicit recovery step is first performed on the noisy intermediate state at the current time:

\begin{equation}
\hat{\mathbf{s}}_t
=
(1-\sigma_{\min})\mathbf{s}_t
+
\sigma_t
\mathbf{v}_\theta
\left(
\mathbf{s}_t,t,\mathbf I
\right),
\label{eq:tweedie}
\end{equation}

where $\hat{s}$ denotes the estimate of the target SDF obtained from the current state and the predicted vector field. This recovery step establishes a direct connection between velocity regression and spatial geometric constraints. The training objective can therefore evaluate not only whether the model learns the correct transport direction, but also whether the transported result satisfies the structural properties of the target SDF.

Based on the recovered $\hat{s}$, we construct a new geometrically regularized training objective. The high-order geometric regularization term constrains high-order spatial variations in the SDF, suppressing high-frequency oscillations and improving the local smoothness and structural continuity of the implicit representation. The distance-field constraint regulates the spatial gradient distribution so that the recovered result retains valid SDF properties. These two constraints are jointly optimized with the original vector-field regression loss, regulating SDF recovery in terms of transport direction, spatial regularity, and distance-field properties. This formulation reduces error accumulation during discrete integration and improves boundary evolution, fine-structure preservation, and inference stability with a small number of integration steps.

First, the following biharmonic-type high-order constraint is introduced:
\begin{equation}
    \mathcal{L}_{\mathrm{bal}}
    = \mathbb{E}\bigl[(\Delta \hat{s})^2\bigr],
    \label{eq:lbal}
\end{equation}
where the second-order Laplacian term characterizes the intensity of high-order spatial variation of the implicit representation, thereby suppressing local high-frequency oscillations and irregular fluctuations in the predicted shape and enhancing spatial smoothness and boundary coherence of the SDF representation.

Furthermore, an Eikonal geometric constraint is introduced:
\begin{equation}
    \mathcal{L}_{\mathrm{eik}}
    = \mathbb{E}
    \left[
        \bigl(|\nabla \hat{s}|^2 - 1\bigr)^2
    \right],
\end{equation}
where
\begin{equation}
    |\nabla \hat{s}|^2
    =
    \left(\frac{\partial \hat{s}}{\partial x}\right)^2
    +
    \left(\frac{\partial \hat{s}}{\partial y}\right)^2,
    \label{eq:leik}
\end{equation}
enforces $|\nabla \hat{s}|\approx 1$ so that the recovered implicit representation approximately satisfies the basic properties of a signed distance function. This constraint encourages $\hat{s}$ to maintain a stable distance-field gradient structure near boundaries, thereby improving geometric interpretability and boundary stability near the zero level set, and mitigating excessively flat or steep implicit representations during continuous evolution.

The overall geometric regularization term is therefore
$\mathcal{L}_{\mathrm{geo}}
=
\mathcal{L}_{\mathrm{bal}}
+
\mathcal{L}_{\mathrm{eik}}$.
The joint optimization objective of the network is
\begin{equation}
    \mathcal{L}
    = \mathcal{L}_{\mathrm{CFM}}
    + \lambda_{\mathrm{bal}}\mathcal{L}_{\mathrm{bal}}
    + \lambda_{\mathrm{eik}}\mathcal{L}_{\mathrm{eik}},
    \label{eq:total-loss}
\end{equation}
where $\lambda_{\mathrm{bal}}>0$ and $\lambda_{\mathrm{eik}}>0$ are the weights for the biharmonic and Eikonal constraints, respectively.

\subsection{Image-guided Vector Field Learning}
\label{sec:architecture}

The overall network is built upon a time-conditioned U-Net architecture, as illustrated in Fig.~\ref{fig:architecture}. The noisy SDF state $\mathbf{s}_t$ and the conditional image $\mathbf{I}$ are first fed into a dual-branch encoder to extract multi-scale features. To fully exploit image structure for constraining flow evolution, features from the two branches are fused element-wise at corresponding scales, yielding a joint representation that combines geometric and semantic information. The flow matching time variable $t$ is mapped to a time embedding via an MLP and injected into each layer of the encoder and decoder, enabling the network to learn a conditionally time-varying velocity field.

Subsequently, the fused features pass through downsampling, upsampling, and skip connections in the U-Net backbone to regress a vector field $\mathbf{v}_\theta(\mathbf{s}_t,t,\mathbf{I})$ at the same resolution as $\mathbf{s}_t$, which describes the evolution direction and rate of change of the SDF field in continuous time. To further recover the target implicit representation, a one-step estimate based on the predicted vector field and the current state $\mathbf{s}_t$ is performed during training to obtain the reconstructed SDF $\hat{s}$ (see Eq.~\eqref{eq:tweedie}). The entire training process is carried out in continuous SDF space.

Together, the image-conditioned architecture and the preceding geometric objective define the vector field used by the inference procedure below.

\begin{figure}[ht]
    \centering
    \includegraphics[width=\textwidth]{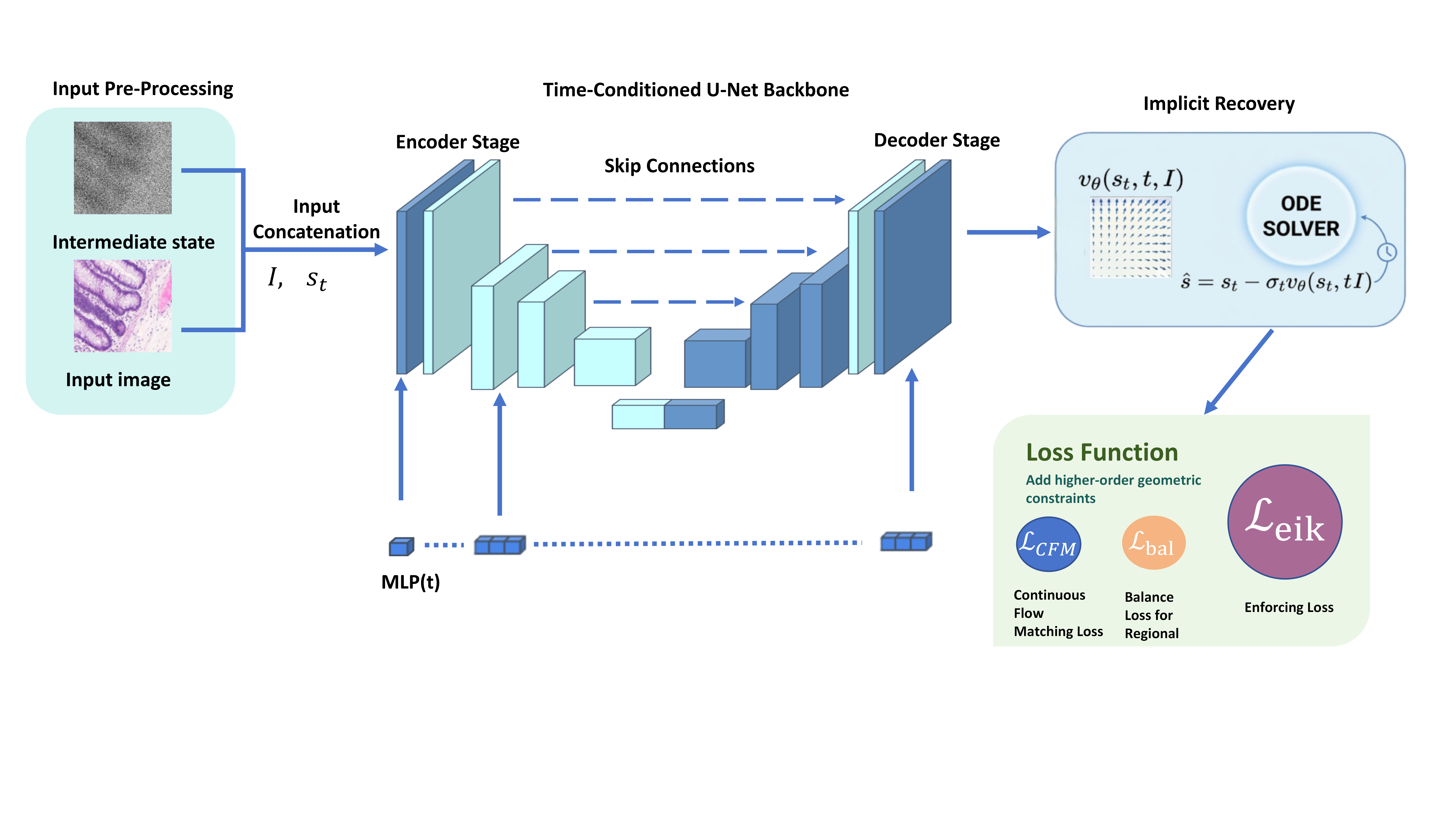}
    \caption{Overall pipeline of the image-guided flow matching segmentation network.}
    \label{fig:architecture}
\end{figure}

\subsection{Inference Strategy}
\label{sec:inference}

At inference, the conditioning image $\mathbf{I}$ and the trained velocity field $\mathbf{v}_{\theta}$ are used to guide the continuous transport process from the prior distribution toward the target SDF distribution. The initial state is sampled from a standard Gaussian prior:
\begin{equation}
\mathbf{s}_0
\sim
p_0
=
\mathcal{N}
\left(
\mathbf{0},
\mathbf{I}_d
\right),
\label{eq:inference-initial-state}
\end{equation}
where $\mathbf{s}_0$ has the same spatial dimensions as the target SDF and $\mathbf{I}_d$ denotes the $d$-dimensional identity matrix. The initial state is retained as a fixed reference throughout the ODE integration, and no additional random noise is introduced during inference.

The ODE is solved using the explicit Euler method. Let $\mathrm{NFE}$ denote the number of vector-field evaluations. The time interval $[0,1]$ is uniformly discretized with step size
\begin{equation}
\Delta t
=
\frac{1}{\mathrm{NFE}}.
\label{eq:inference-step-size}
\end{equation}
The time associated with the $k$-th vector-field evaluation is
\begin{equation}
t_k
=
k\Delta t,
\qquad
k=0,1,\ldots,\mathrm{NFE}-1.
\label{eq:inference-time}
\end{equation}

At time $t_k$, the input state of the velocity field is constructed from the fixed initial state $\mathbf{s}_0$ and the current ODE state $\mathbf{s}_k$:
\begin{equation}
\widetilde{\mathbf{s}}_k
=
\left[
1-
\left(
1-\sigma_{\min}
\right)t_k
\right]
\mathbf{s}_0
+
t_k\mathbf{s}_k,
\label{eq:inference-path-state}
\end{equation}
where $\sigma_{\min}$ denotes the minimum noise scale of the transport path. The path state combines information from the initial prior and the current recovered state, with their contributions changing continuously over time.

The velocity field predicts the SDF evolution direction conditioned on the path state, the current time, and the input image:
\begin{equation}
\mathbf{u}_k
=
\mathbf{v}_{\theta}
\left(
\widetilde{\mathbf{s}}_k,
t_k
\mid
\mathbf{I}
\right).
\label{eq:inference-velocity}
\end{equation}
The current SDF state is updated using the explicit Euler scheme:
\begin{equation}
\mathbf{s}_{k+1}
=
\mathbf{s}_k
+
\Delta t\,\mathbf{u}_k,
\qquad
k=0,1,\ldots,\mathrm{NFE}-1.
\label{eq:inference-euler}
\end{equation}

After $\mathrm{NFE}$ vector-field evaluations, the terminal state $\mathbf{s}_{\mathrm{NFE}}$ is treated as the recovered SDF. Let $\tau$ denote the inference threshold. The final binary segmentation mask is defined as
\begin{equation}
y_{\mathrm{pred}}(\mathbf{x})
=
\mathbf{1}_{\left\{
\mathbf{s}_{\mathrm{NFE}}(\mathbf{x})
\leq
\tau
\right\}}.
\label{eq:inference-binarization}
\end{equation}

The inference process transports a Gaussian prior state toward the target SDF through the image-conditioned velocity field and recovers the final segmentation mask from the terminal SDF. The progressive evolution of the SDF from the initial noise to the target geometric representation is illustrated in Figure~\ref{fig:recovery}.

\begin{algorithm}[ht]
    \caption{Image-Guided SDF Sampling}
    \label{alg:image-guided-sampling}
    \begin{algorithmic}[1]
        \Require Trained velocity field $\mathbf{v}_{\theta}$; conditioning image $\mathbf{I}$; number of function evaluations $\mathrm{NFE}$; minimum noise scale $\sigma_{\min}$; inference threshold $\delta$
        
        \State $\mathbf{s}_0
        \sim
        \mathcal{N}
        \left(
        \mathbf{0},
        \mathbf{I}_d
        \right)$

        \State $\Delta t
        =
        1/\mathrm{NFE}$

        \For{$k=0,1,\ldots,\mathrm{NFE}-1$}
            \State $t_k = k\Delta t$

            \State $\widetilde{\mathbf{s}}_k
            \gets
            \left[
            1-
            \left(
            1-\sigma_{\min}
            \right)t_k
            \right]
            \mathbf{s}_0
            +
            t_k\mathbf{s}_k$

            \State $\mathbf{s}_{k+1}
            \gets
            \mathbf{s}_k
            +
            \Delta t\,
            \mathbf{v}_{\theta}
            \left(
            \widetilde{\mathbf{s}}_k,
            t_k
            \mid
            \mathbf{I}
            \right)$
        \EndFor

        \State $y_{\mathrm{pred}}(\mathbf{x})
        \gets
        \mathbf{1}_{\left\{
        \mathbf{s}_{\mathrm{NFE}}(\mathbf{x})
        \leq
        \tau
        \right\}}$

        \State \Return
        $\mathbf{s}_{\mathrm{NFE}},
        y_{\mathrm{pred}}$
    \end{algorithmic}
\end{algorithm}

\begin{figure}[ht]
    \centering
    \includegraphics[width=\textwidth]{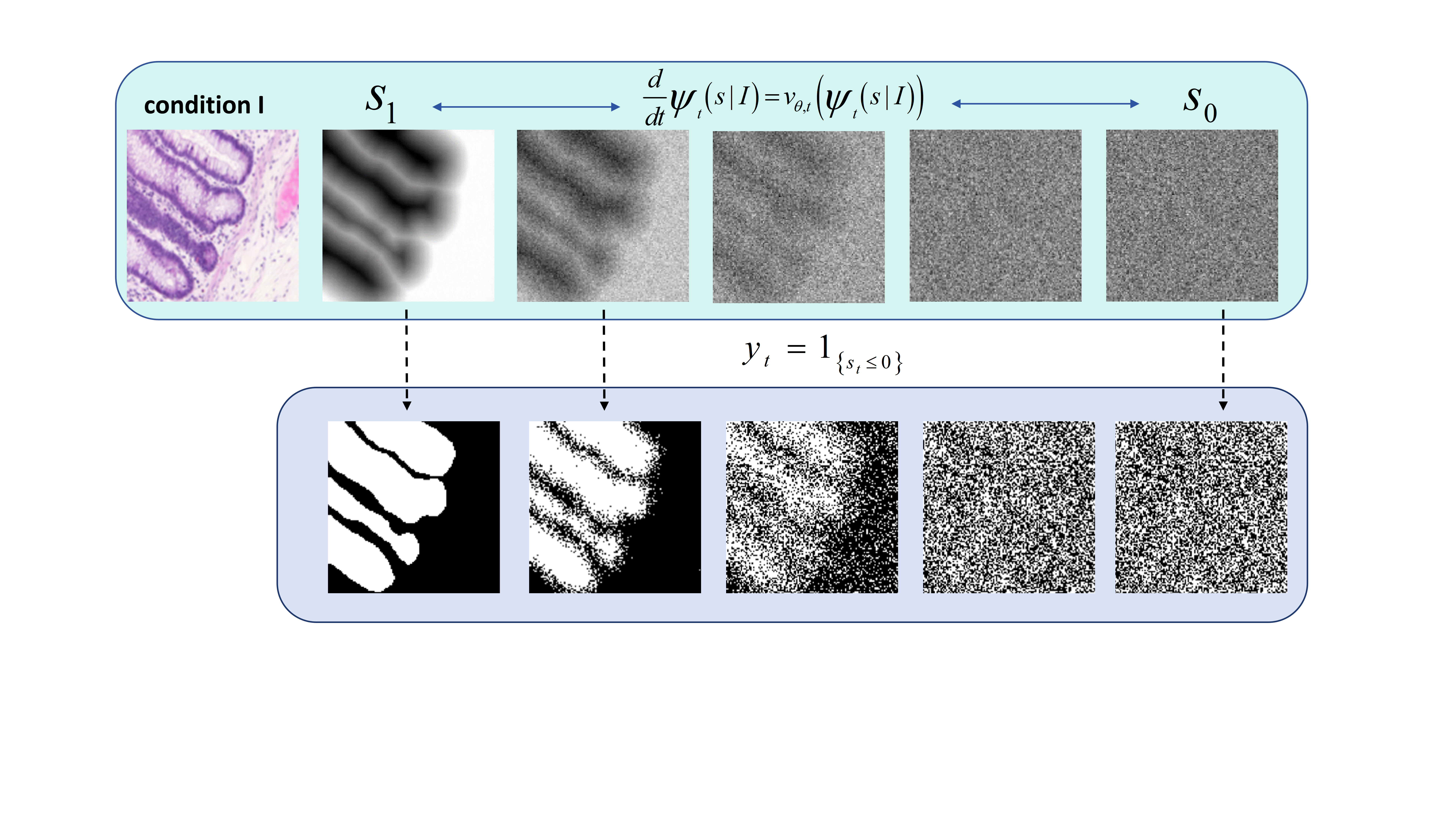}
    \caption{The figure depicts the forward corruption and reverse generation process of the SDF mask $m$ at varying timesteps $t \in [0,1]$. The bottom row shows the derived binary masks $m_t$, illustrating the underlying deformation mechanism governed by the SDF.}
    \label{fig:recovery}
\end{figure}
\section{Theoretical Analysis}
\label{sec:loss-mech}

This paragraph clarifies the role of geometric regularization in Eq.~\eqref{eq:total-loss} from two perspectives. First, because the transported state is an SDF, the flow evolution in this work can be interpreted as level-set transport; within this framework, we analyze the transport properties of the zero level set along continuous flow evolution and the effect of the residual $r$ on interface stability. Second, after adding $\mathcal{L}_{\mathrm{geo}}$, conditional flow matching loss and marginal flow matching loss remain equivalent in the optimization sense.
\subsection{Zero-Level Set Transport Analysis}

The SDF evolution is analyzed within a level-set framework. Let
$\Omega\subset\mathbb{R}^{d}$, $t\in[0,T]$, and
$s:[0,T]\times\Omega\to\mathbb{R}$. The continuous flow determined by
Eq.~\eqref{eq:seg-ode} is written as
\begin{equation}
    \partial_t s(t,\mathbf{x})
    +
    \mathbf{v}(t,\mathbf{x})\cdot\nabla s(t,\mathbf{x})
    =
    r(t,\mathbf{x}),
    \label{eq:sdf-transport}
\end{equation}
where
$\mathbf{v}:[0,T]\times\Omega\to\mathbb{R}^{d}$
is the level-set transport velocity field, and
$r:[0,T]\times\Omega\to\mathbb{R}$
is a residual source term characterizing the deviation of the learned
SDF evolution from exact level-set advection.

Define the zero level set by
\begin{equation}
    \Gamma_t
    :=
    \left\{
    \mathbf{x}\in\Omega:
    s(t,\mathbf{x})=0
    \right\}.
    \label{eq:zero-level-set}
\end{equation}
According to the continuous Flow Matching dynamics, the pointwise
temporal variation of the SDF satisfies
\begin{equation}
    \partial_t s(t,\mathbf{x})
    =
    \mathbf{v}_{\theta}
    \bigl(
    s_t,t,\mathbf{I}
    \bigr)(\mathbf{x}),
    \qquad
    s_t=s(t,\cdot),
    \label{eq:pointwise-sdf-evolution}
\end{equation}
where
$\mathbf{v}_{\theta}(s_t,t,\mathbf{I})(\mathbf{x})$
denotes the predicted rate of change of the SDF at the spatial location
$\mathbf{x}$.

Consider the characteristic curve
\begin{equation}
    \frac{d\mathbf{X}(t)}{dt}
    =
    \mathbf{v}
    \bigl(
    t,\mathbf{X}(t)
    \bigr),
    \qquad
    \mathbf{X}(0)=\mathbf{x}_0.
    \label{eq:characteristic}
\end{equation}
Whenever
$\nabla s(t,\mathbf{X}(t))\neq\mathbf{0}$,
define the unit normal vector by
\begin{equation}
    \mathbf{n}(t)
    :=
    \frac{
    \nabla s
    \bigl(
    t,\mathbf{X}(t)
    \bigr)
    }{
    \left\|
    \nabla s
    \bigl(
    t,\mathbf{X}(t)
    \bigr)
    \right\|
    },
    \label{eq:unit-normal}
\end{equation}
and the normal component of the velocity field by
\begin{equation}
    v_{\perp}(t)
    :=
    \mathbf{v}
    \bigl(
    t,\mathbf{X}(t)
    \bigr)
    \cdot
    \mathbf{n}(t).
    \label{eq:normal-velocity}
\end{equation}
\begin{theorem}[Zero-Level Set Transport]
\label{thm:levelset-transport}

Suppose that $s$ satisfies Eq.~\eqref{eq:sdf-transport}, and let $\mathbf{X}(t)$ be the characteristic curve defined by Eq.~\eqref{eq:characteristic}. Then the evolution of the SDF along the characteristic satisfies
\begin{equation}
\frac{d}{dt}s\bigl(t,\mathbf{X}(t)\bigr)=r\bigl(t,\mathbf{X}(t)\bigr).
\label{eq:characteristic-sdf}
\end{equation}

Moreover, the normal velocity of the zero level set is given by
\begin{equation}
v_{\perp}(t)=-\frac{\mathbf{v}_{\theta}\bigl(s_t,t,\mathbf{I}\bigr)\bigl(\mathbf{X}(t)\bigr)}{\left\|\nabla s\bigl(t,\mathbf{X}(t)\bigr)\right\|}+\frac{r\bigl(t,\mathbf{X}(t)\bigr)}{\left\|\nabla s\bigl(t,\mathbf{X}(t)\bigr)\right\|},
\label{eq:normal-velocity-with-residual}
\end{equation}
provided that $\left\|\nabla s\bigl(t,\mathbf{X}(t)\bigr)\right\|\neq0$ in a neighborhood of the zero level set.

If the residual term vanishes identically, that is, $r\equiv0$, then the SDF value remains constant along each characteristic:
\begin{equation}
s\bigl(t,\mathbf{X}(t)\bigr)=s(0,\mathbf{x}_0), \qquad \forall\,t\in[0,T].
\label{eq:sdf-conservation}
\end{equation}

In particular, if the initial point lies on the initial zero level set, namely,
\begin{equation}
\mathbf{x}_0\in\Gamma_0,
\end{equation}
then
\begin{equation}
\mathbf{X}(t)\in\Gamma_t, \qquad \forall\,t\in[0,T].
\label{eq:exact-interface-transport}
\end{equation}

Therefore, in the absence of the residual term, the zero level set is transported exactly by the continuous flow.

If $r\not\equiv0$, then
\begin{equation}
s\bigl(t,\mathbf{X}(t)\bigr)=s(0,\mathbf{x}_0)+\int_0^t r\bigl(\tau,\mathbf{X}(\tau)\bigr)\,d\tau.
\label{eq:residual-integral}
\end{equation}

Furthermore, if $\mathbf{x}_0\in\Gamma_0$, then
\begin{equation}
\left|s\bigl(t,\mathbf{X}(t)\bigr)\right|\le\int_0^t\left|r\bigl(\tau,\mathbf{X}(\tau)\bigr)\right|\,d\tau.
\label{eq:residual-bound}
\end{equation}

Hence, the deviation from the exact transport of the zero level set is bounded by the accumulated residual along the characteristic.

\end{theorem}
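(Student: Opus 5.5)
The plan is to reduce everything to the chain rule along the characteristic $\mathbf{X}(t)$ from Eq.~\eqref{eq:characteristic}, combined with the transport equation Eq.~\eqref{eq:sdf-transport} and the pointwise identity Eq.~\eqref{eq:pointwise-sdf-evolution}.

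\textbf{Standing assumptions.} I will assume, as implicit in the statement, that $s\in C^1([0,T]\times\Omega)$, that $r$ is continuous, and that $\mathbf{v}$ is continuous in $t$ and Lipschitz in $\mathbf{x}$. By Picard--Lindel\"of this gives a unique $C^1$ characteristic. I also assume the trajectory stays in $\Omega$ on $[0,T]$; otherwise all statements are read up to the exit time. I expect making these hypotheses explicit to be the main (if mild) obstacle: the statement is purely formal, and the only real subtlety is well-posedness of the characteristic and nondegeneracy of $\nabla s$.

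\textbf{Step 1: Eq.~\eqref{eq:characteristic-sdf}.} Differentiate $t\mapsto s(t,\mathbf{X}(t))$:
\[
\frac{d}{dt}s(t,\mathbf{X}(t))=\partial_t s(t,\mathbf{X}(t))+\dot{\mathbf{X}}(t)\cdot\nabla s(t,\mathbf{X}(t)).
\]
Substituting $\dot{\mathbf{X}}=\mathbf{v}(t,\mathbf{X})$ and using Eq.~\eqref{eq:sdf-transport} evaluated at $\mathbf{X}(t)$ gives Eq.~\eqref{eq:characteristic-sdf}.

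\textbf{Step 2: Eq.~\eqref{eq:normal-velocity-with-residual}.} Rearrange Eq.~\eqref{eq:sdf-transport} at $\mathbf{X}(t)$ as
\[
\mathbf{v}\cdot\nabla s = r-\partial_t s .
\]
Then replace $\partial_t s$ by $\mathbf{v}_\theta(s_t,t,\mathbf{I})(\mathbf{X}(t))$ via Eq.~\eqref{eq:pointwise-sdf-evolution}. Dividing by $\|\nabla s(t,\mathbf{X}(t))\|\neq0$ and using the definitions Eq.~\eqref{eq:unit-normal} and Eq.~\eqref{eq:normal-velocity} yields the stated formula. I will remark that this identity holds at every point where the gradient is nonzero. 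When $\mathbf{X}(t)\in\Gamma_t$, it is exactly the normal speed of the zero level set, which is the usual level-set interpretation: the implicit function theorem makes $\Gamma_t$ a $C^1$ hypersurface near such points.

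\textbf{Step 3: integrated statements.} Integrate Eq.~\eqref{eq:characteristic-sdf} over $[0,t]$ by the fundamental theorem of calculus to obtain Eq.~\eqref{eq:residual-integral}, using $\mathbf{X}(0)=\mathbf{x}_0$.
\begin{itemize}
\item If $r\equiv0$, this immediately gives Eq.~\eqref{eq:sdf-conservation}.
\item If in addition $\mathbf{x}_0\in\Gamma_0$, then $s(t,\mathbf{X}(t))=s(0,\mathbf{x}_0)=0$, so $\mathbf{X}(t)\in\Gamma_t$ by Eq.~\eqref{eq:zero-level-set}. This is Eq.~\eqref{eq:exact-interface-transport}.
\item For $r\not\equiv0$ with $\mathbf{x}_0\in\Gamma_0$, the term $s(0,\mathbf{x}_0)$ vanishes. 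Taking absolute values and applying $\left|\int f\right|\le\int|f|$ gives Eq.~\eqref{eq:residual-bound}.
\end{itemize}
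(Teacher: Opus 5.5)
Your proposal is correct and follows the paper's proof essentially step for step. Both use the chain rule along $\mathbf{X}(t)$ combined with Eq.~\eqref{eq:sdf-transport}, then the identity $\mathbf{v}\cdot\nabla s = r-\mathbf{v}_\theta(s_t,t,\mathbf{I})$ divided by $\|\nabla s\|$, then integration over $[0,t]$ with $\bigl|\int f\bigr|\le\int|f|$; your explicit regularity hypotheses, which the paper leaves implicit, are a welcome addition.

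One caution concerns your interpretive remark in Step~2. $v_\perp$ is the normal component of the advecting field $\mathbf{v}$, whereas the geometric normal speed of $\Gamma_t$ is $-\partial_t s/\|\nabla s\| = -\mathbf{v}_\theta(s_t,t,\mathbf{I})/\|\nabla s\|$, so the two agree only where $r=0$. The theorem's own wording blurs this as well, and it does not affect any of the identities you prove.
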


\begin{proof}

Along the characteristic $\mathbf{X}(t)$, the chain rule gives
\begin{equation}
\frac{d}{dt}s\bigl(t,\mathbf{X}(t)\bigr)=\partial_t s\bigl(t,\mathbf{X}(t)\bigr)+\nabla s\bigl(t,\mathbf{X}(t)\bigr)\cdot\frac{d\mathbf{X}(t)}{dt}.
\end{equation}
Together with $\frac{d\mathbf{X}(t)}{dt}=\mathbf{v}\bigl(t,\mathbf{X}(t)\bigr)$ from Eq.~\eqref{eq:characteristic}, we obtain
\begin{equation}
\frac{d}{dt}s\bigl(t,\mathbf{X}(t)\bigr)=\partial_t s\bigl(t,\mathbf{X}(t)\bigr)+\mathbf{v}\bigl(t,\mathbf{X}(t)\bigr)\cdot\nabla s\bigl(t,\mathbf{X}(t)\bigr)=r\bigl(t,\mathbf{X}(t)\bigr),
\end{equation}
where the last equality follows from Eq.~\eqref{eq:sdf-transport}, proving Eq.~\eqref{eq:characteristic-sdf}.

Next, by the definitions of the unit normal and the normal velocity,
\begin{equation}
\mathbf{n}(t)=\frac{\nabla s\bigl(t,\mathbf{X}(t)\bigr)}{\left\|\nabla s\bigl(t,\mathbf{X}(t)\bigr)\right\|}, \qquad v_{\perp}(t)=\mathbf{v}\bigl(t,\mathbf{X}(t)\bigr)\cdot\mathbf{n}(t),
\end{equation}
we have
\begin{equation}
\mathbf{v}\bigl(t,\mathbf{X}(t)\bigr)\cdot\nabla s\bigl(t,\mathbf{X}(t)\bigr)=v_{\perp}(t)\left\|\nabla s\bigl(t,\mathbf{X}(t)\bigr)\right\|.
\end{equation}
Combining Eq.~\eqref{eq:pointwise-sdf-evolution}, Eq.~\eqref{eq:sdf-transport}, and the identity above gives
\begin{equation}
r\bigl(t,\mathbf{X}(t)\bigr)=\mathbf{v}_{\theta}\bigl(s_t,t,\mathbf{I}\bigr)\bigl(\mathbf{X}(t)\bigr)+v_{\perp}(t)\left\|\nabla s\bigl(t,\mathbf{X}(t)\bigr)\right\|.
\end{equation}
Therefore, when $\left\|\nabla s\bigl(t,\mathbf{X}(t)\bigr)\right\|\neq0$,
\begin{equation}
v_{\perp}(t)=-\frac{\mathbf{v}_{\theta}\bigl(s_t,t,\mathbf{I}\bigr)\bigl(\mathbf{X}(t)\bigr)}{\left\|\nabla s\bigl(t,\mathbf{X}(t)\bigr)\right\|}+\frac{r\bigl(t,\mathbf{X}(t)\bigr)}{\left\|\nabla s\bigl(t,\mathbf{X}(t)\bigr)\right\|},
\end{equation}
which proves Eq.~\eqref{eq:normal-velocity-with-residual}.

(1) If $r\equiv0$, then Eq.~\eqref{eq:characteristic-sdf} reduces to
\begin{equation}
\frac{d}{dt}s\bigl(t,\mathbf{X}(t)\bigr)=0.
\end{equation}
Hence,
\begin{equation}
s\bigl(t,\mathbf{X}(t)\bigr)=s(0,\mathbf{x}_0),
\end{equation}
which is Eq.~\eqref{eq:sdf-conservation}. If $\mathbf{x}_0\in\Gamma_0$, then $s(0,\mathbf{x}_0)=0$, and thus
\begin{equation}
s\bigl(t,\mathbf{X}(t)\bigr)=0.
\end{equation}
Therefore,
\begin{equation}
\mathbf{X}(t)\in\Gamma_t, \qquad \forall\,t\in[0,T],
\end{equation}
which proves Eq.~\eqref{eq:exact-interface-transport}.

(2) If $r\not\equiv0$, then
\begin{equation}
\frac{d}{dt}s\bigl(t,\mathbf{X}(t)\bigr)=r\bigl(t,\mathbf{X}(t)\bigr).
\end{equation}
Integrating over $[0,t]$ gives
\begin{equation}
s\bigl(t,\mathbf{X}(t)\bigr)-s(0,\mathbf{x}_0)=\int_0^t r\bigl(\tau,\mathbf{X}(\tau)\bigr)\,d\tau,
\end{equation}
which is Eq.~\eqref{eq:residual-integral}. If $\mathbf{x}_0\in\Gamma_0$, then $s(0,\mathbf{x}_0)=0$, so
\begin{equation}
s\bigl(t,\mathbf{X}(t)\bigr)=\int_0^t r\bigl(\tau,\mathbf{X}(\tau)\bigr)\,d\tau.
\end{equation}
By the triangle inequality,
\begin{equation}
\left|s\bigl(t,\mathbf{X}(t)\bigr)\right|\le\int_0^t\left|r\bigl(\tau,\mathbf{X}(\tau)\bigr)\right|\,d\tau,
\end{equation}
which proves Eq.~\eqref{eq:residual-bound}.

Finally, if the Eikonal condition approximately holds near the zero level set, namely, $\|\nabla s\|\approx1$, then Eq.~\eqref{eq:normal-velocity-with-residual} becomes
\begin{equation}
v_{\perp}(t)\approx-\mathbf{v}_{\theta}\bigl(s_t,t,\mathbf{I}\bigr)\bigl(\mathbf{X}(t)\bigr)+r\bigl(t,\mathbf{X}(t)\bigr).
\end{equation}
Thus, when the residual is small, the predicted SDF evolution provides an approximation to the normal motion of the zero level set.

\end{proof}

\begin{remark}
A nonzero residual $r$ drives trajectories away from the zero level set and induces interface drift. When only $\mathcal{L}_{\mathrm{CFM}}$ is optimized, $\mathbf{v}_\theta$ is regressed on noisy $\mathbf{s}_t$, and it is difficult to ensure $r\approx 0$; zero level set instability therefore tends to arise under few-step ODE integration. $\mathcal{L}_{\mathrm{bal}}$ and $\mathcal{L}_{\mathrm{eik}}$ act on $\hat{s}$ to constrain the implicit interface from the perspectives of smoothness and the Eikonal structure $|\nabla s|\approx 1$, respectively, which helps suppress the effective residual $r$ and improves the smoothness of evolution trajectories and integration stability at low NFE.
\end{remark}

\subsection{Marginal Regression Equivalence}
Within the flow matching framework of Section~\ref{sec:preliminaries}, given image $\mathbf{I}$ and SDF label $s$, sample $t\sim\mathcal{U}[0,1]$ and obtain $\mathbf{s}_t$ from the conditional path $p_t(\mathbf{s}_t\mid s,\mathbf{I})$ (Eq.~\eqref{eq:affine-path}), with conditional target velocity $\mathbf{u}_t(\mathbf{s}_t\mid s,\mathbf{I})$ (Eq.~\eqref{eq:target-velocity}). The network outputs $\mathbf{v}_\theta(\mathbf{s}_t,t,\mathbf{I})$ and applies geometric regularization on $\hat{s}$ (Eq.~\eqref{eq:tweedie}),
$\mathcal{L}_{\mathrm{geo}}(\mathbf{s}_t,t,\mathbf{I};\mathbf{v}_\theta)=\mathcal{L}_{\mathrm{bal}}+\mathcal{L}_{\mathrm{eik}}$ (Eqs.~\eqref{eq:lbal} and~\eqref{eq:leik}).
Define the marginal vector field
\begin{equation}
    \bar{\mathbf{u}}_t(\mathbf{s}_t,t,\mathbf{I})
    := \mathbb{E}\bigl[\mathbf{u}_t(\mathbf{s}_t\mid s,\mathbf{I})\mid \mathbf{s}_t,t,\mathbf{I}\bigr],
    \label{eq:marginal-velocity}
\end{equation}
and
\begin{equation}
    \mathcal{L}_{\mathrm{cond}}(\theta)
    = \mathbb{E}
    \left[
        \bigl\|
            \mathbf{v}_\theta(\mathbf{s}_t,t,\mathbf{I})
            - \mathbf{u}_t(\mathbf{s}_t\mid s,\mathbf{I})
        \bigr\|^2
    \right]
    + \mathbb{E}\bigl[\mathcal{L}_{\mathrm{geo}}(\mathbf{s}_t,t,\mathbf{I};\mathbf{v}_\theta)\bigr],
    \label{eq:lcond-geo}
\end{equation}
\begin{equation}
    \mathcal{L}_{\mathrm{marg}}(\theta)
    = \mathbb{E}
    \left[
        \bigl\|
            \mathbf{v}_\theta(\mathbf{s}_t,t,\mathbf{I})
            - \bar{\mathbf{u}}_t(\mathbf{s}_t,t,\mathbf{I})
        \bigr\|^2
    \right]
    + \mathbb{E}\bigl[\mathcal{L}_{\mathrm{geo}}(\mathbf{s}_t,t,\mathbf{I};\mathbf{v}_\theta)\bigr].
    \label{eq:lmarg-geo}
\end{equation}

\begin{theorem}[Marginal regression equivalence]
    \label{thm:marginal-equiv}
    For any parameter $\theta$, there exists a constant $C\ge 0$ independent of $\theta$ such that
    \begin{equation}
        \mathcal{L}_{\mathrm{cond}}(\theta)
        = \mathcal{L}_{\mathrm{marg}}(\theta) + C,
        \label{eq:cond-marg-equiv}
    \end{equation}
    where
    \begin{equation}
        C
        = \mathbb{E}
        \left[
            \bigl\|
                \mathbf{u}_t(\mathbf{s}_t\mid s,\mathbf{I})
                - \bar{\mathbf{u}}_t(\mathbf{s}_t,t,\mathbf{I})
            \bigr\|^2
        \right].
        \label{eq:cond-marg-C}
    \end{equation}
\end{theorem}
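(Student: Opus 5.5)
The geometric term $\mathbb{E}[\mathcal{L}_{\mathrm{geo}}(\mathbf{s}_t,t,\mathbf{I};\mathbf{v}_\theta)]$ appears identically in Eq.~\eqref{eq:lcond-geo} and Eq.~\eqref{eq:lmarg-geo}. It depends only on $(\mathbf{s}_t,t,\mathbf{I})$ and $\theta$, through $\hat{s}$ in Eq.~\eqref{eq:tweedie}, and not on the label $s$. Both objectives are expectations over the same joint law of $(t,\mathbf{I},s,\mathbf{s}_t)$, so this term cancels in the difference $\mathcal{L}_{\mathrm{cond}}(\theta)-\mathcal{L}_{\mathrm{marg}}(\theta)$. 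It therefore suffices to prove the identity for the squared-regression parts alone. That is the standard bias--variance decomposition underlying the equivalence of Eqs.~\eqref{eq:fm-loss} and~\eqref{eq:cfm-loss}.

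For the regression parts, abbreviate $\mathbf{v}=\mathbf{v}_\theta(\mathbf{s}_t,t,\mathbf{I})$, $\mathbf{u}=\mathbf{u}_t(\mathbf{s}_t\mid s,\mathbf{I})$ and $\bar{\mathbf{u}}=\bar{\mathbf{u}}_t(\mathbf{s}_t,t,\mathbf{I})$. Write $\mathbf{v}-\mathbf{u}=(\mathbf{v}-\bar{\mathbf{u}})+(\bar{\mathbf{u}}-\mathbf{u})$ and expand the square:
\begin{equation*}
\mathbb{E}\|\mathbf{v}-\mathbf{u}\|^2=\mathbb{E}\|\mathbf{v}-\bar{\mathbf{u}}\|^2+\mathbb{E}\|\mathbf{u}-\bar{\mathbf{u}}\|^2+2\,\mathbb{E}\bigl[\langle \mathbf{v}-\bar{\mathbf{u}},\,\bar{\mathbf{u}}-\mathbf{u}\rangle\bigr].
\end{equation*}
To kill the cross term, apply the tower property, conditioning on $(\mathbf{s}_t,t,\mathbf{I})$. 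Given this $\sigma$-algebra, $\mathbf{v}-\bar{\mathbf{u}}$ is measurable and can be pulled out. By the definition in Eq.~\eqref{eq:marginal-velocity}, $\mathbb{E}[\bar{\mathbf{u}}-\mathbf{u}\mid \mathbf{s}_t,t,\mathbf{I}]=0$, so the cross term vanishes. The middle term is exactly $C$ in Eq.~\eqref{eq:cond-marg-C}. It is manifestly nonnegative, and it is independent of $\theta$ because $\mathbf{u}$, $\bar{\mathbf{u}}$ and the sampling law of $(t,\mathbf{I},s,\mathbf{s}_t)$ involve no network parameters.

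The main obstacle is integrability, needed to justify the expansion and the conditioning. I need $\mathbb{E}\|\mathbf{u}\|^2<\infty$ and $\mathbb{E}\|\mathbf{v}\|^2<\infty$, plus finiteness of $\mathbb{E}[\mathcal{L}_{\mathrm{geo}}]$ for the cancellation. The last is not an issue if both sides are allowed to equal $+\infty$ simultaneously.

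For $\mathbf{u}$, use Eq.~\eqref{eq:path-derivative}: $\mathbf{u}=s-(1-\sigma_{\min})\boldsymbol{\epsilon}$. This has finite second moment because truncated SDFs are bounded by $\delta$ and $\boldsymbol{\epsilon}$ is Gaussian. This form also avoids the apparent singularity of Eq.~\eqref{eq:target-velocity} at $t\to1$, since $\sigma_{\min}>0$ keeps the denominator at least $\sigma_{\min}$. For $\mathbf{v}$, I would either assume finite second moment or note that otherwise both losses are infinite and the identity holds trivially.

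Existence of the conditional expectation $\bar{\mathbf{u}}$ follows from $\mathbf{u}\in L^2$. Combining the three steps gives Eq.~\eqref{eq:cond-marg-equiv}.
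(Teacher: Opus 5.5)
Your proposal is correct and matches the paper's proof. It uses the same splitting $\mathbf{v}-\mathbf{u}=(\mathbf{v}-\bar{\mathbf{u}})+(\bar{\mathbf{u}}-\mathbf{u})$, the same tower-property argument to remove the cross term, and the same observation that $\mathbb{E}[\mathcal{L}_{\mathrm{geo}}]$ is common to both objectives. Your integrability remarks go beyond the paper, which leaves these points implicit: $\mathbf{u}=s-(1-\sigma_{\min})\boldsymbol{\epsilon}$ is square-integrable because $s$ is bounded on the finite image domain, and both sides equal $+\infty$ when $\mathbf{v}\notin L^2$.
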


\begin{proof}
    Since
    $\bar{\mathbf{u}}_t(\mathbf{s}_t,t,\mathbf{I})
    =\mathbb{E}[\mathbf{u}_t(\mathbf{s}_t\mid s,\mathbf{I})\mid\mathbf{s}_t,t,\mathbf{I}]$,
    we have
    \begin{equation}
        \mathbf{v}_\theta(\mathbf{s}_t,t,\mathbf{I})
        -\mathbf{u}_t(\mathbf{s}_t\mid s,\mathbf{I})
        =
        \bigl(\mathbf{v}_\theta(\mathbf{s}_t,t,\mathbf{I})
        -\bar{\mathbf{u}}_t(\mathbf{s}_t,t,\mathbf{I})\bigr)
        +
        \bigl(\bar{\mathbf{u}}_t(\mathbf{s}_t,t,\mathbf{I})
        -\mathbf{u}_t(\mathbf{s}_t\mid s,\mathbf{I})\bigr).
    \end{equation}
    Squaring both sides and taking expectations yields
    \begin{equation}
        \begin{aligned}
            &\mathbb{E}
            \bigl[
                \bigl\|
                    \mathbf{v}_\theta(\mathbf{s}_t,t,\mathbf{I})
                    -\mathbf{u}_t(\mathbf{s}_t\mid s,\mathbf{I})
                \bigr\|^2
            \bigr] \\
            ={}& \mathbb{E}
            \bigl[
                \bigl\|
                    \mathbf{v}_\theta(\mathbf{s}_t,t,\mathbf{I})
                    -\bar{\mathbf{u}}_t(\mathbf{s}_t,t,\mathbf{I})
                \bigr\|^2
            \bigr]
            +
            \mathbb{E}
            \bigl[
                \bigl\|
                    \bar{\mathbf{u}}_t(\mathbf{s}_t,t,\mathbf{I})
                    -\mathbf{u}_t(\mathbf{s}_t\mid s,\mathbf{I})
                \bigr\|^2
            \bigr] \\
            &\quad
            + 2\,\mathbb{E}
            \bigl[
                \bigl\langle
                    \mathbf{v}_\theta(\mathbf{s}_t,t,\mathbf{I})
                    -\bar{\mathbf{u}}_t(\mathbf{s}_t,t,\mathbf{I}),\,
                    \bar{\mathbf{u}}_t(\mathbf{s}_t,t,\mathbf{I})
                    -\mathbf{u}_t(\mathbf{s}_t\mid s,\mathbf{I})
                \bigr\rangle
            \bigr].
        \end{aligned}
        \label{eq:error-decomp}
    \end{equation}
    Because $\mathbf{v}_\theta(\mathbf{s}_t,t,\mathbf{I})
    -\bar{\mathbf{u}}_t(\mathbf{s}_t,t,\mathbf{I})$
    is measurable with respect to $(\mathbf{s}_t,t,\mathbf{I})$, and
    $\mathbb{E}[\bar{\mathbf{u}}_t(\mathbf{s}_t,t,\mathbf{I})
    -\mathbf{u}_t(\mathbf{s}_t\mid s,\mathbf{I})
    \mid\mathbf{s}_t,t,\mathbf{I}]=\mathbf{0}$,
    we have
    \begin{equation}
        \mathbb{E}
        \bigl[
            \bigl\langle
                \mathbf{v}_\theta(\mathbf{s}_t,t,\mathbf{I})
                -\bar{\mathbf{u}}_t(\mathbf{s}_t,t,\mathbf{I}),\,
                \bar{\mathbf{u}}_t(\mathbf{s}_t,t,\mathbf{I})
                -\mathbf{u}_t(\mathbf{s}_t\mid s,\mathbf{I})
            \bigr\rangle
        \bigr]=0.
    \end{equation}
    Moreover, $\mathcal{L}_{\mathrm{geo}}$ does not depend on $s$, and hence
    \begin{equation}
        \mathbb{E}
        \bigl[
            \bigl\|
                \mathbf{v}_\theta(\mathbf{s}_t,t,\mathbf{I})
                -\mathbf{u}_t(\mathbf{s}_t\mid s,\mathbf{I})
            \bigr\|^2
        \bigr]
        =
        \mathbb{E}
        \bigl[
            \bigl\|
                \mathbf{v}_\theta(\mathbf{s}_t,t,\mathbf{I})
                -\bar{\mathbf{u}}_t(\mathbf{s}_t,t,\mathbf{I})
            \bigr\|^2
        \bigr]+C.
    \end{equation}
    Therefore $\mathcal{L}_{\mathrm{cond}}(\theta)=\mathcal{L}_{\mathrm{marg}}(\theta)+C$, with $C$ independent of $\theta$.
\end{proof}

\begin{remark}
$\mathcal{L}_{\mathrm{geo}}$ is independent of $s$.
$\mathcal{L}_{\mathrm{cond}}(\theta)=\mathcal{L}_{\mathrm{marg}}(\theta)+C$, and $C$ is independent of $\theta$; the additional high-order geometric constraints do not break the basic equivalence in flow matching between regressing the conditional velocity field and regressing the marginal velocity field.
\end{remark}
\section{Experiments}

\subsection{Datasets and evolution metrics}\label{subsec:datasets}

We evaluate our method on three public medical benchmarks--MoNuSeg, GlaS, and DRIVE, in order to assess generalization under small-sample training and diverse anatomical structures.

\textbf{(1) MoNuSeg dataset}~\cite{kumar2017dataset}.
The dataset contains 24 training images and 4 test images, and is mainly used to evaluate segmentation performance in a low-data regime.
Each image has a resolution of $1000\times1000$ and more than 21{,}000 annotated nuclei in H\&E-stained histopathology microscopy images.
Because images from different organ sites exhibit substantial intensity variation, we apply structure-preserving color normalization~\cite{vahadane2016structure} during preprocessing;
all images are then resized to $500\times500$ following~\cite{li2021semantic}.
During training, we use random crops of $256\times256$ and do \textbf{not} apply additional data augmentation.
We adopt the official train/test split (24 training vs.\ 4 test images) without further random re-partitioning.

\textbf{(2) Gland Segmentation (GlaS) dataset}~\cite{sirinukunwattana2017glas}.
The dataset provides 85 training and 80 test images under the official split.
All images are H\&E-stained microscopy slides of colorectal cancer tissue with gland annotations.
In preprocessing, all training and test images are resized to $128\times128$ and normalized using the same structure-preserving scheme as in~\cite{vahadane2016structure}.
In our experimental protocol, the official training and test sets are merged and re-split at a train:test ratio of $1{:}4$, yielding 33 training and 132 test images.
Training uses random crops of $128\times128$.

\textbf{(3) DRIVE dataset}~\cite{staal2004drive}.
DRIVE is a standard benchmark for retinal vessel segmentation in color fundus images, with 40 training and 20 test images at $565\times584$ resolution, and is used to evaluate boundary delineation of thin tubular structures.
Images and labels are resized to $512\times512$ in preprocessing.
Training employs $128\times128$ random crops with the same background filtering strategy as on GlaS .
At test time, inference is performed on full images; when the resolution exceeds the training crop size, we use a sliding-window predictor and average predictions in overlapping regions.

\paragraph{Evaluation metrics}

Let $s_{\mathrm{pred}}$ and $s_{\mathrm{gt}}$ denote the predicted and ground-truth SDFs, respectively, and let $\tau$ denote the inference threshold. The thresholded predicted foreground region $P$ and ground-truth foreground region $G$ are defined as
\begin{equation}
P
=
\left\{
\mathbf{x}\in\Omega
\mid
s_{\mathrm{pred}}(\mathbf{x})\leq \tau
\right\},
\qquad
G
=
\left\{
\mathbf{x}\in\Omega
\mid
s_{\mathrm{gt}}(\mathbf{x})\leq 0
\right\}.
\end{equation}

The Dice coefficient is defined as
\begin{equation}
\mathrm{Dice}
=
\frac{2|P\cap G|}{|P|+|G|},
\end{equation}
where $|P\cap G|$ denotes the number of pixels in the intersection of the predicted and ground-truth regions. A higher Dice score indicates a larger overlap between the two regions.

The Intersection over Union (IoU) is defined as
\begin{equation}
\mathrm{IoU}
=
\frac{|P\cap G|}{|P\cup G|},
\end{equation}
where $|P\cup G|$ denotes the number of pixels in the union of the predicted and ground-truth regions. A higher IoU value indicates better pixel-level segmentation performance.

Boundary accuracy is evaluated using the 95th-percentile Hausdorff distance (HD$_{95}$). Let $\partial P$ and $\partial G$ denote the boundaries of the predicted and ground-truth regions, respectively. For a predicted boundary point $p\in\partial P$ and a ground-truth boundary point $g\in\partial G$, the point-wise Euclidean distance is defined as
\begin{equation}
d(p,g)
=
\left\|p-g\right\|_2.
\end{equation}

Based on this point-wise Euclidean distance, the conventional Hausdorff distance between the two boundary sets is defined as
\begin{equation}
H(\partial P,\partial G)
=
\max
\left\{
\sup_{p\in\partial P}
\inf_{g\in\partial G}
d(p,g),
\;
\sup_{g\in\partial G}
\inf_{p\in\partial P}
d(g,p)
\right\}.
\end{equation}

HD$_{95}$ replaces the maximum directed distances in the conventional Hausdorff distance with their 95th percentiles. The two directed boundary-distance sets are defined as
\begin{equation}
D_{P\rightarrow G}
=
\left\{
\inf_{g\in\partial G}d(p,g)
\mid
p\in\partial P
\right\},
\qquad
D_{G\rightarrow P}
=
\left\{
\inf_{p\in\partial P}d(g,p)
\mid
g\in\partial G
\right\}.
\end{equation}
HD$_{95}$ is then calculated as
\begin{equation}
\mathrm{HD}_{95}(P,G)
=
\max
\left\{
Q_{0.95}\left(D_{P\rightarrow G}\right),
\;
Q_{0.95}\left(D_{G\rightarrow P}\right)
\right\},
\end{equation}
where $Q_{0.95}$ denotes the 95th percentile of a directed boundary-distance set. A lower HD$_{95}$ value indicates closer agreement between the predicted and ground-truth boundaries.

In the experimental tables, $\uparrow$ and $\downarrow$ indicate that higher and lower values are preferred, respectively.

\subsection{Implementation Details}
\paragraph{Training details}
The deep network that approximates the vector field $v_{\theta}(\cdot,t,x)$ follows a U-Net-style design.
For conditioning, we adopt the image-guided mechanism of SegDiff~\cite{amit2021segdiff}: features extracted from the noisy mask $s_t$ are fused with feature maps of the conditional image $x$ via element-wise addition.
The network takes the input image $x$, random time $t$, and intermediate state $s_t$ as inputs and learns to approximate the target velocity field $v_\theta(s_t,t,x)$.
Supervision is provided by truncated SDF representations $\tilde{m}$ converted from segmentation annotations.

We train with the Adam optimizer at an initial learning rate of $1\times10^{-4}$ and a batch size of 4.
By default, each dataset is trained for $10{,}000$ epochs.
Exponential moving average (EMA) with decay $0.9999$ is applied throughout training for stable parameter updates.
All experiments are conducted on a single NVIDIA Titan RTX 6000 GPU.
Let $S_{\mathrm{ep}}$ denote the number of iterations per epoch (determined by dataset size and batch size); the total number of optimization steps is approximately $10{,}000\times S_{\mathrm{ep}}$.
Specifically, on MoNuSeg each epoch takes about $3.8\,\mathrm{s}$ with $S_{\mathrm{ep}}=6$ ($\sim 6.0\times10^4$ steps in total);
on GlaS each epoch takes about $7.1\,\mathrm{s}$ with $S_{\mathrm{ep}}\approx 9$;
on DRIVE each epoch takes about $11.7\,\mathrm{s}$ with $S_{\mathrm{ep}}=10$ ($\sim 1.0\times10^5$ steps in total).
Unless otherwise stated, all other hyperparameters are shared across datasets.

\paragraph{Inference details}
At test time, we sample an initial SDF state $s_0$ from a standard Gaussian prior and integrate the learned velocity field with an ODE solver to obtain the final SDF prediction $\tilde{m}_1$.
The binary mask is recovered by zero-level thresholding, $m=\mathbf{1}_{\{\tilde{m}_1\leq \tau\}}$.
We use an Euler ODE solver by default; the NFE, i.e., discretization steps, is denoted by $N_{\mathrm{step}}$.
When multiple independent samples are drawn, we average $K$ SDF predictions to approximate the minimum mean squared error estimate before thresholding for metric computation.

\subsection{Ablation Study}\label{subsec:ablation}

To evaluate the contribution of each geometric constraint, we conduct an ablation study based on the FlowSDF baseline. Starting from the SDF-based flow matching model, we progressively add the biharmonic regularization term $\mathcal{L}_{\mathrm{bal}}$ and the Eikonal constraint $\mathcal{L}_{\mathrm{eik}}$. The full model, denoted as Ours, combines both geometric constraints in the joint optimization objective. This comparison is designed to isolate the effects of high-order smoothness and distance-field consistency on segmentation overlap and boundary localization. All results are obtained by repeating the experiments with different random seeds, namely $0$, $42$, $666$, $888$, and $3407$, and are reported in terms of mean and standard deviation, thereby reflecting both average performance and robustness to randomness.

\begin{table}[H]
  \centering
  \caption{Ablation study of geometric regularization terms on GlaS, MoNuSeg, and DRIVE.}
  \label{tab:ablation_geo}
  \footnotesize
  \setlength{\tabcolsep}{8pt}
  \renewcommand{\arraystretch}{1.12}
  \begin{tabular}{|c|l|c|c|c|}
    \hline
    Dataset & Method & Dice$\uparrow$ & IoU/mIoU$\uparrow$ & HD$_{95}$$\downarrow$ \\
    \hline
    \multirow{4}{*}{GlaS}
    & FlowSDF & $0.9346\pm0.0342$ & $0.8791\pm0.0575$ & $9.2783\pm7.5068$ \\
    & FlowSDF+$\mathcal{L}_{\mathrm{bal}}$ & $0.9412\pm0.0080$ & $0.8890\pm0.0142$ & $\underline{7.4505\pm3.1695}$ \\
    & FlowSDF+$\mathcal{L}_{\mathrm{eik}}$ & $\underline{0.9414\pm0.0073}$ & $\underline{0.8894\pm0.0130}$ & $\mathbf{6.4344\pm2.6099}$ \\
    & Ours & $\mathbf{0.9419\pm0.0077}$ & $\mathbf{0.8903\pm0.0138}$ & $7.4821\pm2.6826$ \\
    \hline
    \multirow{4}{*}{MoNuSeg}
    & FlowSDF & $\mathbf{0.8690\pm0.0286}$ & $0.6255\pm0.0379$ & $5.0725\pm1.2434$ \\
    & FlowSDF+$\mathcal{L}_{\mathrm{bal}}$ & $0.8447\pm0.0281$ & $0.6295\pm0.0357$ & $\underline{4.1377\pm0.6378}$ \\
    & FlowSDF+$\mathcal{L}_{\mathrm{eik}}$ & $\underline{0.8689\pm0.0187}$ & $\underline{0.6297\pm0.0246}$ & $4.7690\pm0.5414$ \\
    & Ours & $0.8623\pm0.0113$ & $\mathbf{0.6361\pm0.0148}$ & $\mathbf{4.0983\pm0.3682}$ \\
    \hline
    \multirow{4}{*}{DRIVE}
    & FlowSDF & $0.7781\pm0.0226$ & $0.6374\pm0.0299$ & $13.0054\pm3.4181$ \\
    & FlowSDF+$\mathcal{L}_{\mathrm{bal}}$ & $0.8624\pm0.0095$ & $0.7582\pm0.0147$ & $4.0881\pm1.2311$ \\
    & FlowSDF+$\mathcal{L}_{\mathrm{eik}}$ & $\underline{0.8896\pm0.0128}$ & $\underline{0.8013\pm0.0207}$ & $\underline{2.7188\pm1.1700}$ \\
    & Ours & $\mathbf{0.9253\pm0.0208}$ & $\mathbf{0.8616\pm0.0376}$ & $\mathbf{1.3652\pm0.8463}$ \\
    \hline
  \end{tabular}
\end{table}

To complement the numerical results in Table~\ref{tab:ablation_geo}, we further visualize the variation of the boundary-sensitive HD$_{95}$ metric across the five random seeds.

\begin{figure}[H]
    \centering
    \includegraphics[width=\textwidth]{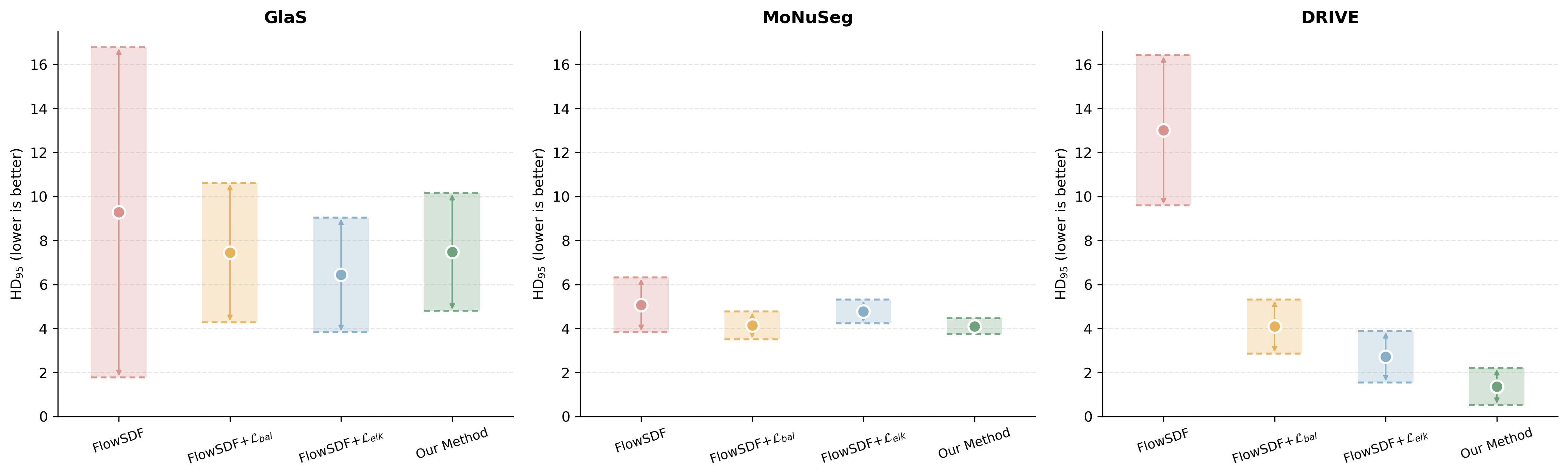}
    \caption{
    Comparison of the mean and standard deviation of HD$_{95}$ across five
    random seeds on GlaS, MoNuSeg, and DRIVE.
    The circular markers denote the mean values, while the vertical intervals
    indicate the corresponding mean $\pm$ standard deviation.
    Narrower intervals represent lower sensitivity to random initialization
    and more stable boundary localization.
    }
    \label{fig:hd95_ablation}
\end{figure}

Overall, introducing geometric constraints on top of the FlowSDF baseline
improves the region-overlap metrics on multiple datasets, reduces boundary
localization errors, and decreases the variation of several metrics across
different random seeds. This indicates that geometric regularization not only
improves the final segmentation accuracy but also enhances the stability of
the continuous SDF transport process.

The improvement is particularly evident on DRIVE, where vessel structures are
thin, elongated, and prone to discontinuities. The HD$_{95}$ of the FlowSDF
baseline is $13.0054\pm3.4181$, whereas the proposed full model reduces it to
$1.3652\pm0.8463$. In addition to the substantial reduction in the mean
boundary error, the smaller standard deviation demonstrates that the proposed
method produces more consistent predictions across different random seeds.
Without explicit geometric constraints, the continuous flow evolution is more
likely to suffer from local boundary drift or unstable zero-level-set
transport. By contrast, the full model better preserves vessel continuity and
boundary stability.

These results suggest that the proposed geometric constraints mainly improve boundary localization and reduce performance variation across repeated runs. Their effect on region overlap differs across datasets. The improvement is clearer on the DRIVE dataset, where preserving thin vessel structures is particularly important. This result suggests a balance between spatial smoothness and preserving fine details. Having analyzed the effects of the geometric constraints, we next examine their performance under different numbers of ODE function evaluations.

\subsection{Influence of NFE}

We vary only the number of Euler steps used to integrate the learned velocity field at inference, and report how this choice affects wall-clock inference time and segmentation accuracy on MoNuSeg, GlaS, and DRIVE.
At inference, the model integrates an ODE with an Euler solver from a noisy SDF initialization $s_0$ along the learned velocity field.
$N_{\mathrm{step}}$ is the number of discretization steps; a larger value typically increases the computational cost per forward pass.

\paragraph{Inference time}
Table~\ref{tab:ode_inference_time} reports the average inference time per test image (in seconds) for different $N_{\mathrm{step}}$ on the three datasets. Inference time increases monotonically with $N_{\mathrm{step}}$ and grows approximately linearly on MoNuSeg, GlaS, and DRIVE. Under the same hardware and implementation, our per-image inference time is substantially lower than diffusion-based segmenters such as SegDiff and MedSegDiff, which require tens to hundreds of sampling steps, and also lower than flow-matching baselines such as FlowSDF, which typically use more ODE steps.

\begin{table}[H]
  \centering
  \caption{Influence of ODE solver steps on inference time across diverse datasets.}
  \label{tab:ode_inference_time}
  \begin{tabular}{|c|c|c|c|}
    \hline
    $N_{\mathrm{step}}$ & MoNuSeg & GlaS & DRIVE \\
    \hline
    1   & 0.01 & 0.01 & 0.01 \\
    2   & 0.28 & 0.33 & 0.43 \\
    4   & 0.35 & 0.42 & 0.86 \\
    10  & 0.50 & 0.73 & 2.09 \\
    100 & 2.85 & 4.88 & 20.26 \\
    \hline
  \end{tabular}
\end{table}

\paragraph{Segmentation performance vs.\ $N_{\mathrm{step}}$}
Figure~\ref{fig:ode_nfe_metrics} plots segmentation metrics as a function of $N_{\mathrm{step}}$. Results show that the performance on all three datasets is already close to convergence at $N_{\mathrm{step}}{=}2$, where Dice and IoU have reached near-stable levels. When $N_{\mathrm{step}}$ is further increased, metrics do not consistently improve; some datasets exhibit minor fluctuations or slight degradation, suggesting that overly fine ODE discretization is unnecessary and may accumulate numerical integration error. Together with Fig.~\ref{fig:ode_nfe_metrics}, these results show that $N_{\mathrm{step}}{=}2$ provides a near-converged accuracy--efficiency setting, while further increases in $N_{\mathrm{step}}$ lead to slower and limited performance improvements.

To further examine sampling stability under limited function evaluations, we compare our method with FlowSDF under different NFE settings. This comparison evaluates how quickly each method reaches a stable segmentation state as the number of ODE integration steps increases. As shown in Table~\ref{tab:nfe_stability_flowsdf}, our method is expected to achieve stable segmentation performance with a smaller number of function evaluations, while FlowSDF generally requires more integration steps to approach a comparable stable state.

This behavior can be attributed to the high-order geometric constraints imposed on the recovered SDF. The biharmonic regularization suppresses local high-frequency oscillations in the implicit representation, while the Eikonal constraint maintains a stable distance-field structure near the zero level set. These constraints reduce the sensitivity of the learned flow to discretization errors during ODE integration, allowing the predicted SDF to evolve toward a geometrically stable state even under low NFE. In contrast, without explicit high-order geometric constraints, the flow evolution is more likely to exhibit boundary drift or unstable zero-level-set transport when the integration is coarse.

\begin{table}[H]
  \centering
  \caption{NFE stability comparison between FlowSDF and our method across datasets.}
  \label{tab:nfe_stability_flowsdf}
  \footnotesize
  \setlength{\tabcolsep}{5pt}
  \resizebox{\linewidth}{!}{%
  \begin{tabular}{|l|l|c|c|c|c|}
    \hline
    Dataset & Method & NFE & Dice$\uparrow$ & IoU/mIoU$\uparrow$ & HD$_{95}$$\downarrow$ \\
    \hline
    \multirow{8}{*}{GlaS}
    & \multirow{4}{*}{FlowSDF} & 1   & $0.5041\pm0.0673$ & $0.3397\pm0.0602$ & $28.8421\pm4.9108$ \\
    &                          & 2   & $0.9360\pm0.0295$ & $0.8811\pm0.0501$ & $9.6390\pm7.5978$ \\
    &                          & 4   & $0.9346\pm0.0342$ & $0.8791\pm0.0575$ & $9.2783\pm7.5068$ \\
    &                          & 100 & $0.9338\pm0.0367$ & $0.8780\pm0.0615$ & $10.0825\pm7.1377$ \\
    \cline{2-6}
    & \multirow{4}{*}{Ours}    & 1   & $0.4721\pm0.0280$ & $0.3094\pm0.0243$ & $33.4063\pm2.3655$ \\
    &                          & 2   & $\mathbf{0.9419\pm0.0077}$ & $\mathbf{0.8903\pm0.0138}$ & $\underline{7.4821\pm2.6826}$ \\
    &                          & 4   & $\underline{0.9412\pm0.0072}$ & $\underline{0.8890\pm0.0129}$ & $\mathbf{7.3909\pm2.7884}$ \\
    &                          & 100 & $0.9357\pm0.0115$ & $0.8793\pm0.0202$ & $7.7676\pm2.5543$ \\
    \hline
    \multirow{8}{*}{MoNuSeg}
    & \multirow{4}{*}{FlowSDF} & 1   & $0.2625\pm0.0037$ & $0.1511\pm0.0025$ & $11.2334\pm0.3013$ \\
    &                          & 2   & $0.8666\pm0.0321$ & $0.6226\pm0.0424$ & $5.0792\pm1.0939$ \\
    &                          & 4   & $\underline{0.8690\pm0.0286}$ & $0.6255\pm0.0379$ & $5.0725\pm1.2434$ \\
    &                          & 100 & $\mathbf{0.8744\pm0.0267}$ & $\underline{0.6327\pm0.0358}$ & $5.9615\pm1.1716$ \\
    \cline{2-6}
    & \multirow{4}{*}{Ours}    & 1   & $0.2571\pm0.0053$ & $0.1475\pm0.0035$ & $10.4009\pm0.1774$ \\
    &                          & 2   & $0.8623\pm0.0113$ & $\mathbf{0.6361\pm0.0148}$ & $\underline{4.0983\pm0.3682}$ \\
    &                          & 4   & $0.8588\pm0.0082$ & $0.6114\pm0.0106$ & $\mathbf{4.0881\pm0.1661}$ \\
    &                          & 100 & $0.8633\pm0.0074$ & $0.6173\pm0.0097$ & $5.0964\pm0.1567$ \\
    \hline
    \multirow{8}{*}{DRIVE}
    & \multirow{4}{*}{FlowSDF} & 1   & $0.1574\pm0.1495$ & $0.0873\pm0.1157$ & $75.4630\pm6.7892$ \\
    &                          & 2   & $0.7723\pm0.0220$ & $0.6296\pm0.0289$ & $19.8501\pm11.4841$ \\
    &                          & 4   & $0.7781\pm0.0226$ & $0.6374\pm0.0299$ & $13.0054\pm3.4181$ \\
    &                          & 100 & $0.7671\pm0.0325$ & $0.6281\pm0.0281$ & $13.1032\pm4.5125$ \\
    \cline{2-6}
    & \multirow{4}{*}{Ours}    & 1   & $0.1469\pm0.0049$ & $0.0793\pm0.0029$ & $71.0509\pm3.4470$ \\
    &                          & 2   & $0.9253\pm0.0208$ & $0.8616\pm0.0365$ & $\mathbf{1.3652\pm0.8463}$ \\
    &                          & 4   & $\mathbf{0.9369\pm0.0210}$ & $\mathbf{0.8821\pm0.0376}$ & $5.3652\pm3.8463$ \\
    &                          & 100 & $\underline{0.9365\pm0.0221}$ & $\underline{0.8813\pm0.0395}$ & $\underline{2.4691\pm2.4536}$ \\
    \hline
  \end{tabular}}
\end{table}

\begin{figure}[ht]
	\centering
	\subfigure{
		\begin{minipage}[h]{0.3\linewidth}
			\includegraphics[width=1\linewidth]{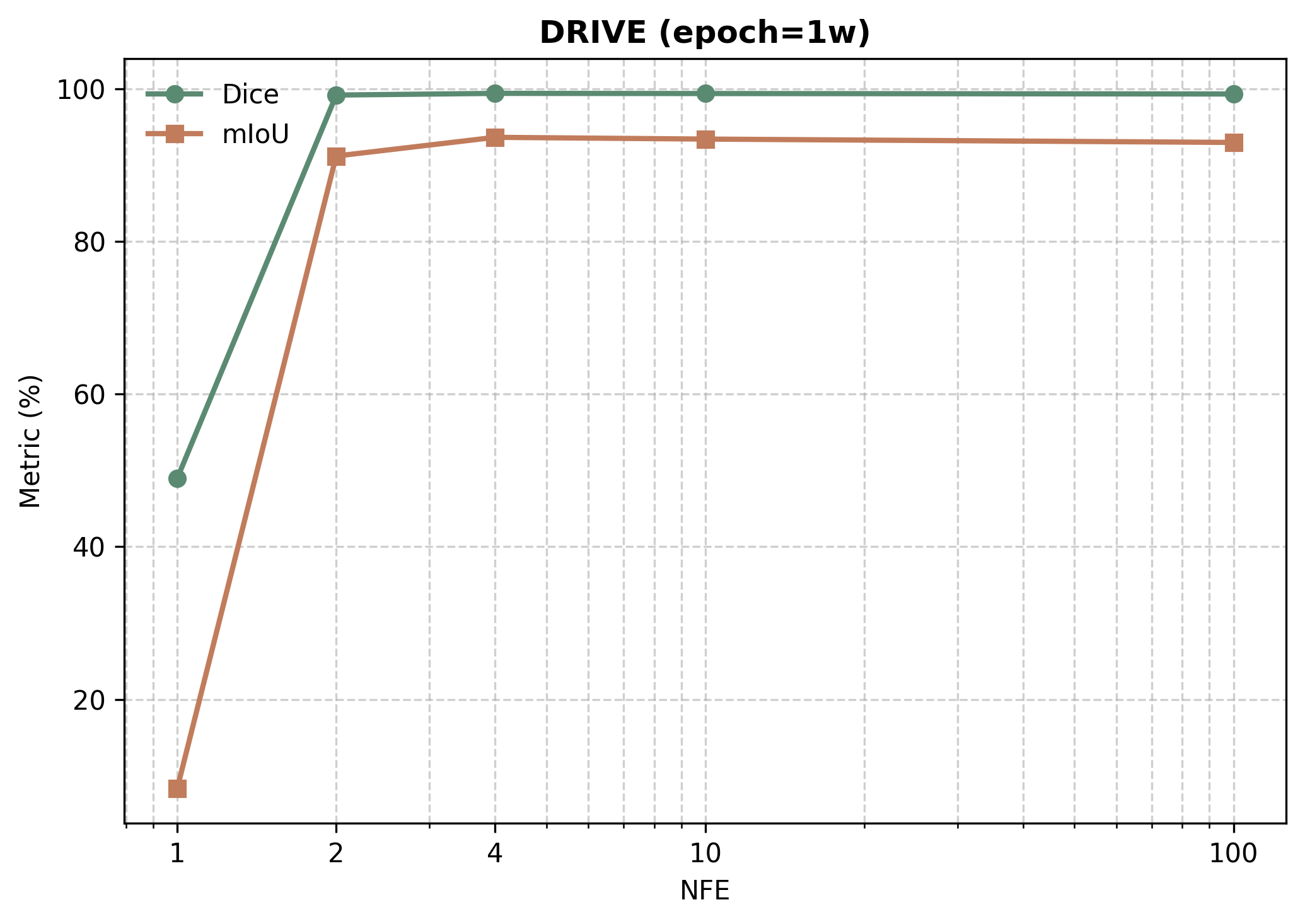} 
		\end{minipage}
		\label{Drive}
	}
    	\subfigure{
    		\begin{minipage}[h]{0.3\linewidth}
   		 	\includegraphics[width=1\linewidth]{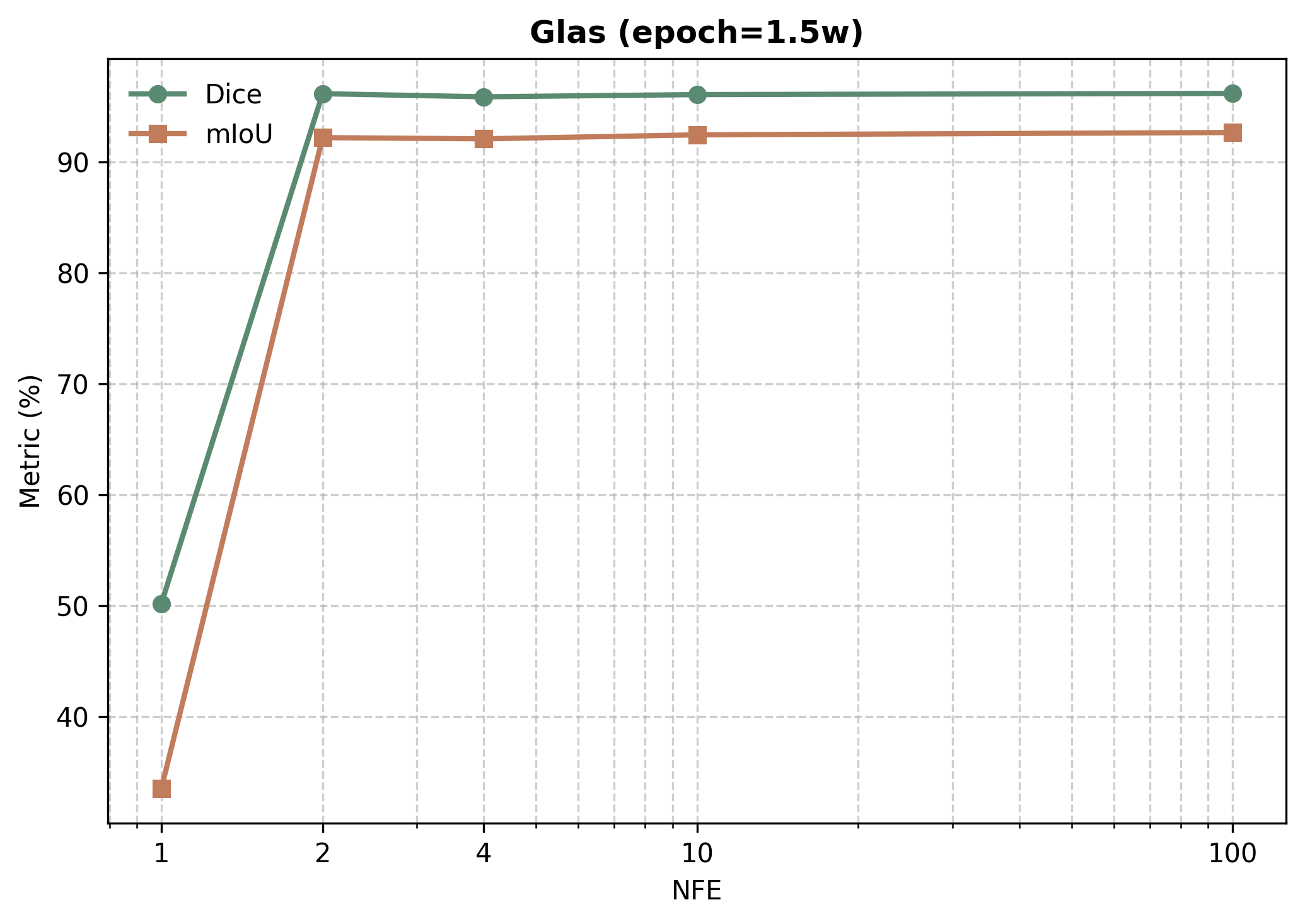}
    		\end{minipage}
		\label{Glas}
    	}
      \subfigure{
    		\begin{minipage}[h]{0.3\linewidth}
   		 	\includegraphics[width=1\linewidth]{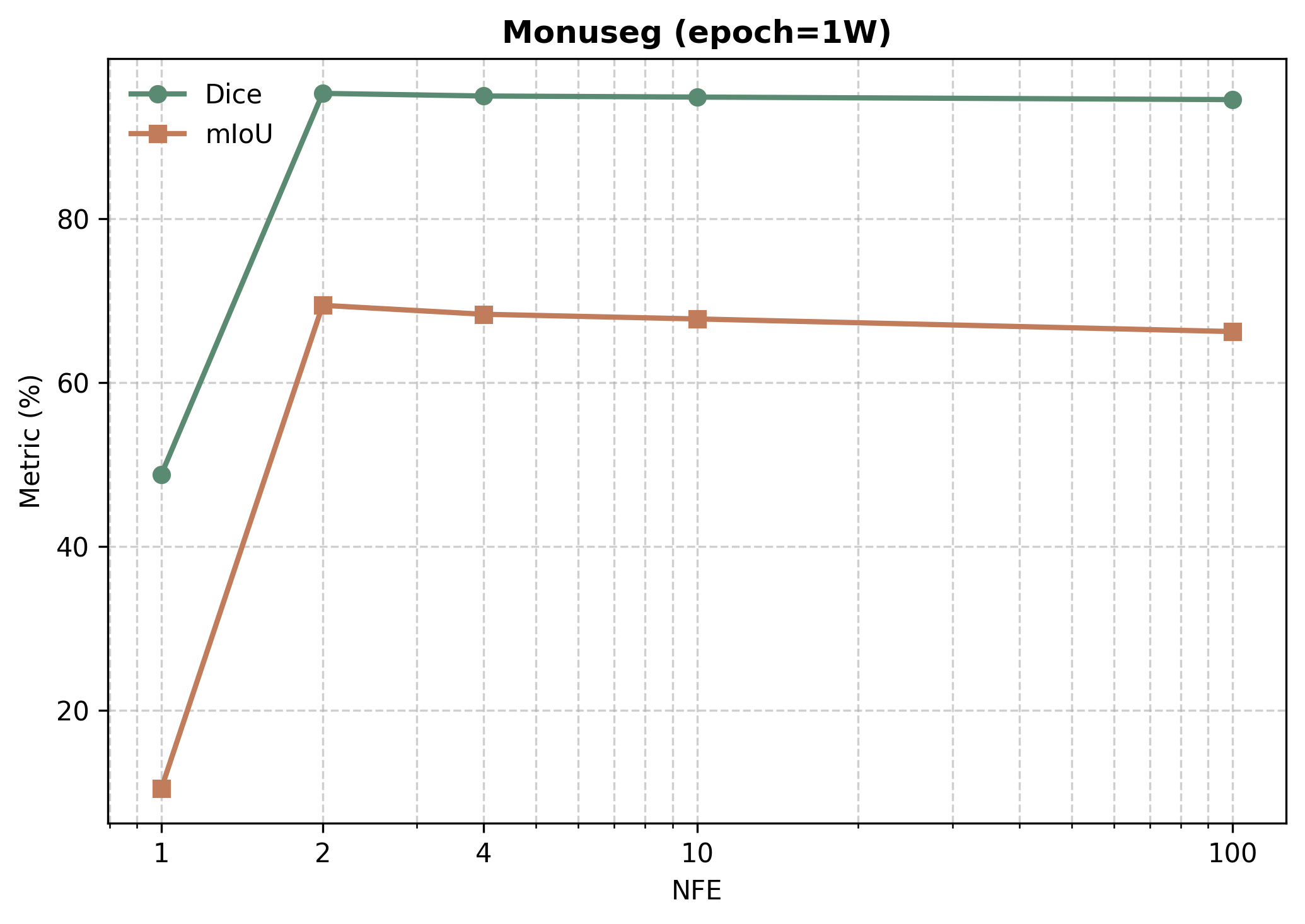}
    		\end{minipage}
		\label{MoNUseg}
    	}
		\caption{Influence of the number of ODE solver steps during sampling on experimental outcomes across diverse datasets.}
	\label{fig:ode_nfe_metrics}
\end{figure}

The main results in Table~\ref{tab:quantitative} use $N_{\mathrm{step}}{=}2$ at inference. This configuration allows the segmentation process to reach a near-converged state, while further integration steps produce only slower and limited performance improvements. It also requires fewer sampling steps and less runtime than generative baselines such as FlowSDF and SegDiff, highlighting the deployment efficiency of flow matching with SDF representations.

\subsection{Comparison with SOTA models}\label{subsec:comparison}
We compare our approach against representative baselines in medical image segmentation, including classical discriminative networks, Transformer architectures, diffusion/flow-matching generative models, and topology-preserving methods:
U-Net~\cite{ronneberger2015u}, U-Net++~\cite{zhou2018unet++}, TransUNet~\cite{chen2021transunet}, Swin-Unet~\cite{cao2022swin},
SegDiff~\cite{amit2021segdiff}, MedSegDiff~\cite{wu2024medsegdiff},
Topograph~\cite{lux2025topograph}, and FlowSDF~\cite{bogensperger2025flowsdf}.
U-Net and U-Net++ are the most widely used encoder--decoder baselines; TransUNet and Swin-Unet incorporate Transformers into a U-Net hybrid and a pure Transformer segmentation design, respectively;
SegDiff and MedSegDiff represent diffusion-based generative segmentation; Topograph enforces topological correctness via a graph-structured loss, which is particularly relevant for connected structures such as vessels and glands;
FlowSDF performs conditional flow matching in SDF space and serves as the closest generative counterpart to our method.
For a fair comparison, all methods use the same data splits and preprocessing as in \S\ref{subsec:datasets}; for SegDiff, MedSegDiff, FlowSDF, and our method, inference steps and sampling strategies follow the defaults in the original papers or our unified protocol.

Quantitative comparison with baselines is reported in Table~\ref{tab:quantitative}.
The results show that our method achieves competitive Dice, IoU , and HD$_{95}$ scores on GlaS, MoNuSeg, and DRIVE.
Compared with U-Net variants and Transformer baselines, generative modeling yields smoother boundaries and improved stability under limited training data.
Compared with diffusion models and flow-matching methods , our approach attains stable predictions in SDF space with very few ODE steps, enabling faster inference and a favorable accuracy--efficiency trade-off.

Figure~\ref{fig:qualitative} provides qualitative segmentation examples of our method on MoNuSeg and GlaS.
Each row corresponds to one dataset; from left to right: condition image $x$, SDF prediction $\tilde{m}_1$ obtained by ODE integration, overlay visualization $x\otimes m$, binary mask $m$ by zero-level thresholding of $\tilde{m}_1$, and expert annotation $m_{\mathrm{gt}}$.
The recovered structures are spatially coherent; thresholded $m$ closely matches $m_{\mathrm{gt}}$ on nuclei and gland contours, and the overlay column indicates good alignment with histological structures.
This illustrates that our flow-matching framework can generate clear segmentations from a noise prior to a final mask in SDF space.


\begin{table}[H]
  \centering
  \caption{Quantitative comparison on GlaS, MoNuSeg, and DRIVE.}
  \label{tab:quantitative}
  \footnotesize
  \setlength{\tabcolsep}{3pt}
  \resizebox{\linewidth}{!}{%
  \begin{tabular}{|c|c|c|c|c|c|c|c|c|c|c|}
    \hline
    \multirow{2}{*}{Method} & \multirow{2}{*}{Year} &
    \multicolumn{3}{|c|}{GlaS} & \multicolumn{3}{c|}{MoNuSeg} & \multicolumn{3}{c|}{DRIVE} \\
    \cline{3-5}\cline{6-8}\cline{9-11}
    & &
    IoU$\uparrow$ & Dice$\uparrow$ & HD$_{95}$$\downarrow$ &
    IoU$\uparrow$ & Dice$\uparrow$ & HD$_{95}$$\downarrow$ &
    IoU$\uparrow$ & Dice$\uparrow$ & HD$_{95}$$\downarrow$ \\
    \hline
    U-net       & 2015 & 0.6800 & 0.7341 & 28.96 & 0.5791 & 0.7334 & 16.71 & 0.6067 & 0.7552 & 37.96 \\
    U-net++     & 2018 & 0.6555 & 0.7803 & 24.71 & \textbf{0.6604} & 0.7949 & 14.65 & 0.6685 & \underline{0.8778} & 35.75 \\
    TransUNet   & 2021 & 0.8225 & \underline{0.8978} & 15.96 & 0.5267 & 0.6872 & 11.35 & 0.6164 & 0.7627 & 28.28 \\
    Swin-Unet   & 2021 & 0.8194 & 0.8706 & 14.39 & 0.5331 & 0.5970 & 11.27 & 0.6455 & 0.7204 & 29.48 \\
    SegDiff     & 2021 & 0.6387 & 0.7795 & 24.72 & 0.6259 & 0.7688 & 12.78 & 0.2573 & 0.2719 & 107.02\\
    MedSegDiff  & 2022 & 0.6479 & 0.7185 & 21.57 & \underline{0.6378} & 0.7869 & 13.42 & 0.4593 & 0.6295 & 28.90 \\
    Topograph   & 2025 & 0.7878 & 0.8807 & 20.67 & 0.6186 & 0.7643 & 7.57  & \underline{0.7123} & 0.8315 & \underline{5.76} \\
    FlowSDF     & 2025 & \underline{0.8771} & 0.8931 & \underline{6.61}  & 0.6071 & \underline{0.9175} & \underline{5.45}  & 0.6611 & 0.7960 & 15.52 \\
    Our Method  & 2026 & \textbf{0.8915} & \textbf{0.9427} & \textbf{5.94}  & 0.6314 & \textbf{0.9222} & \textbf{4.98}  & \textbf{0.9149} & \textbf{0.9555} & \textbf{1.00} \\

    \hline
  \end{tabular}}
\end{table}

For FlowSDF and our method, the results reported in Table~\ref{tab:quantitative} are selected as the best-performing runs among different random seeds. The mean and standard deviation over multiple random seeds are further reported in the subsequent ablation study to provide a more complete assessment of stability.

\begin{figure}[h]
    \centering
    
    \begin{minipage}[c][0.145\textheight][c]{0.12\linewidth}
        \centering
        \textbf{DRIVE}
    \end{minipage}%
    \subfigure{%
        \begin{minipage}[c]{0.145\linewidth}
            \includegraphics[width=\linewidth]{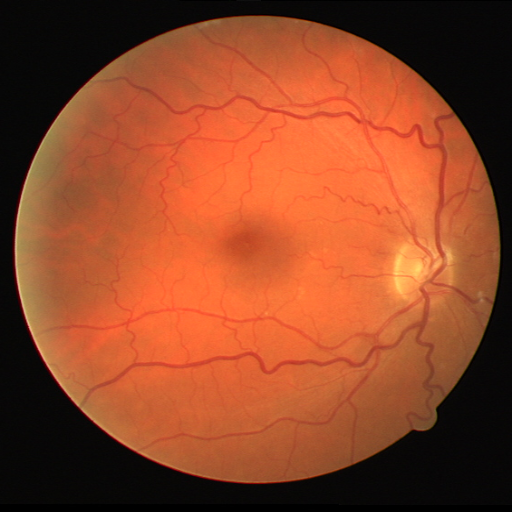}
        \end{minipage}
        \label{fig:drive1}
    }%
    \subfigure{%
        \begin{minipage}[c]{0.145\linewidth}
            \includegraphics[width=\linewidth]{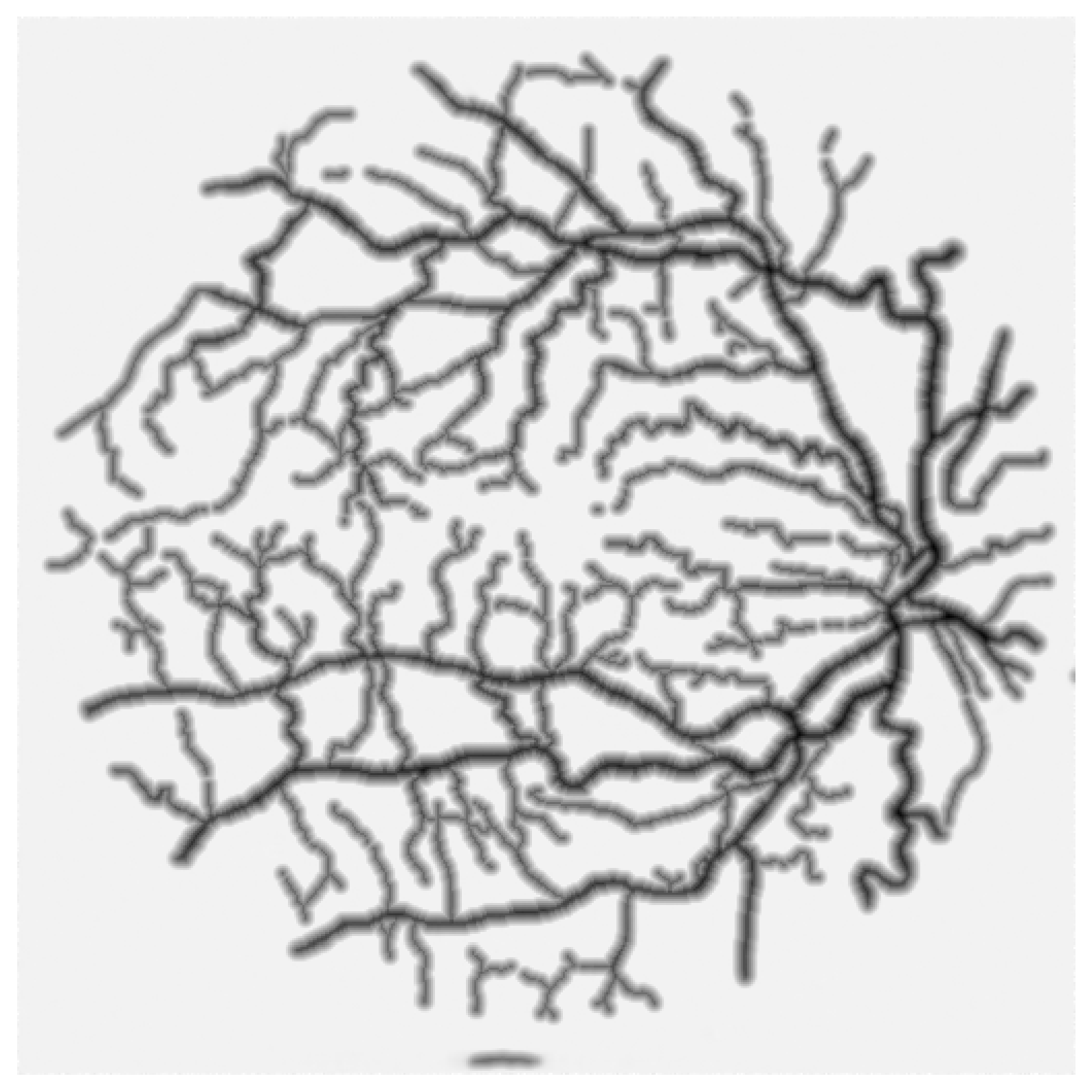}
        \end{minipage}
        \label{fig:drive2}
    }%
    \subfigure{%
        \begin{minipage}[c]{0.145\linewidth}
            \includegraphics[width=\linewidth]{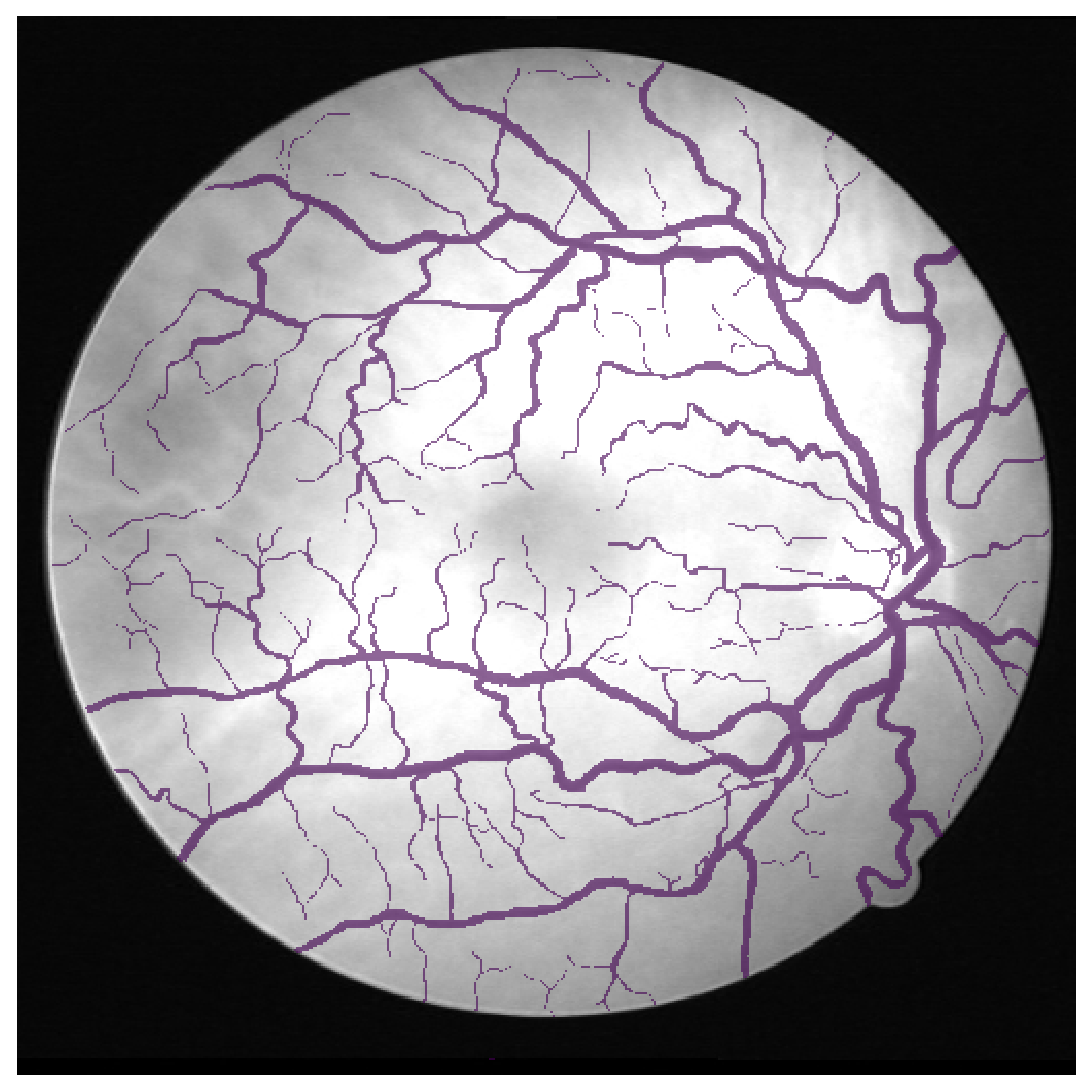}
        \end{minipage}
        \label{fig:drive3}
    }%
    \subfigure{%
        \begin{minipage}[c]{0.145\linewidth}
            \includegraphics[width=\linewidth]{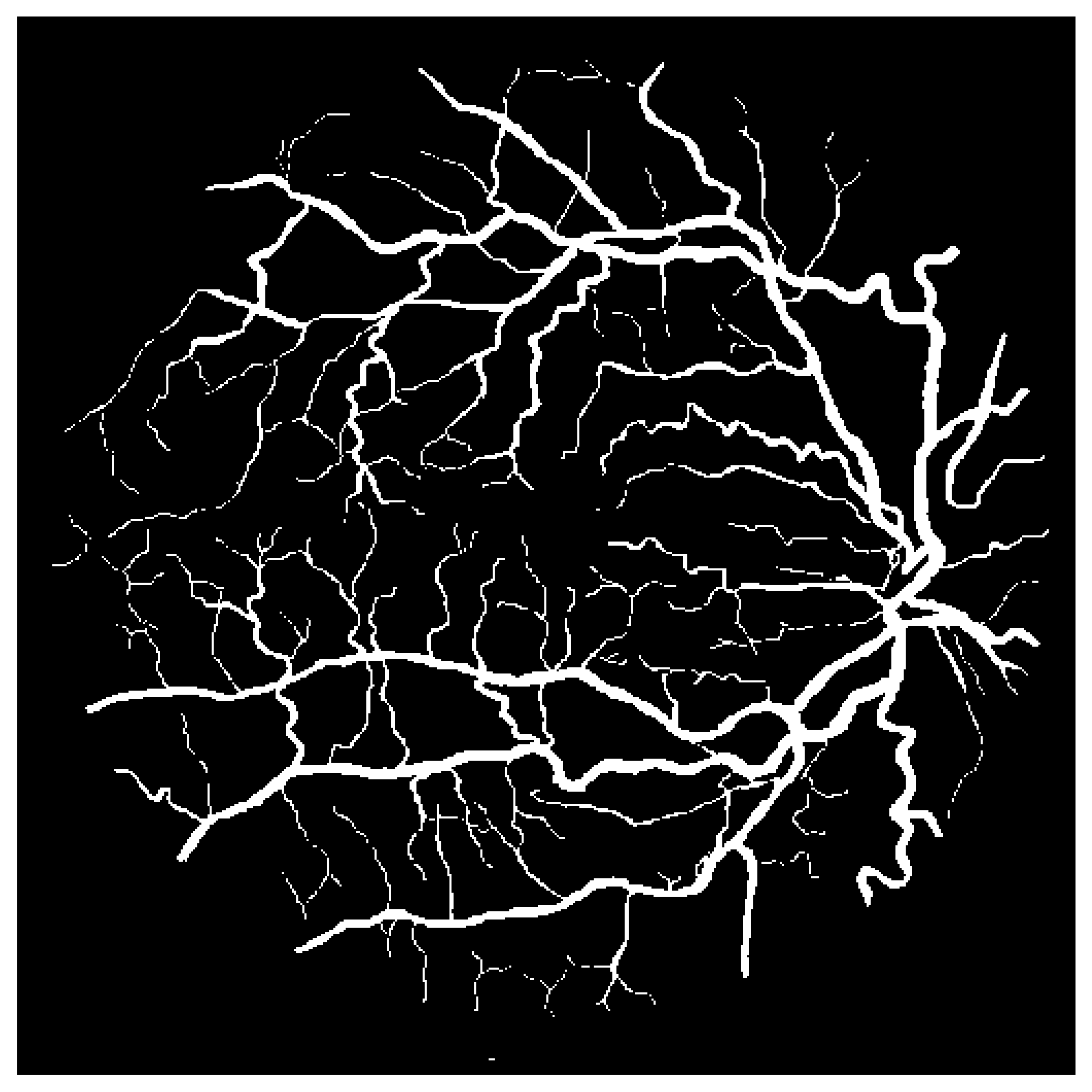}
        \end{minipage}
        \label{fig:drive4}
    }%
    \subfigure{%
        \begin{minipage}[c]{0.145\linewidth}
            \includegraphics[width=\linewidth]{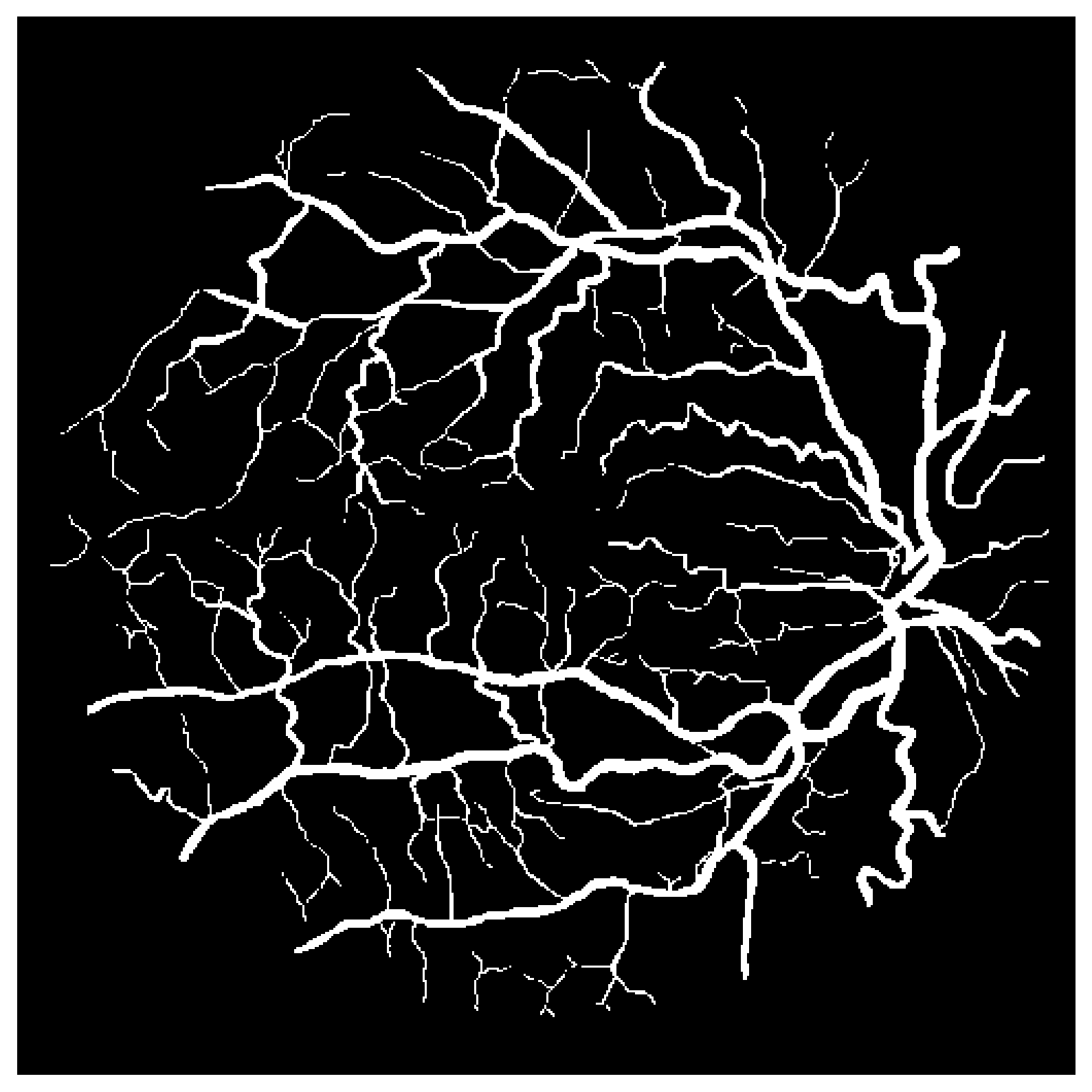}
        \end{minipage}
        \label{fig:drive5}
    }

    \vspace{0.5em}

    \begin{minipage}[c][0.12\textheight][c]{0.12\linewidth}
        \centering
        \textbf{GlaS}
    \end{minipage}%
    \subfigure{%
        \begin{minipage}[c]{0.145\linewidth}
            \includegraphics[width=\linewidth]{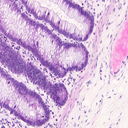}
        \end{minipage}
        \label{fig:glas1}
    }%
    \subfigure{%
        \begin{minipage}[c]{0.145\linewidth}
            \includegraphics[width=\linewidth]{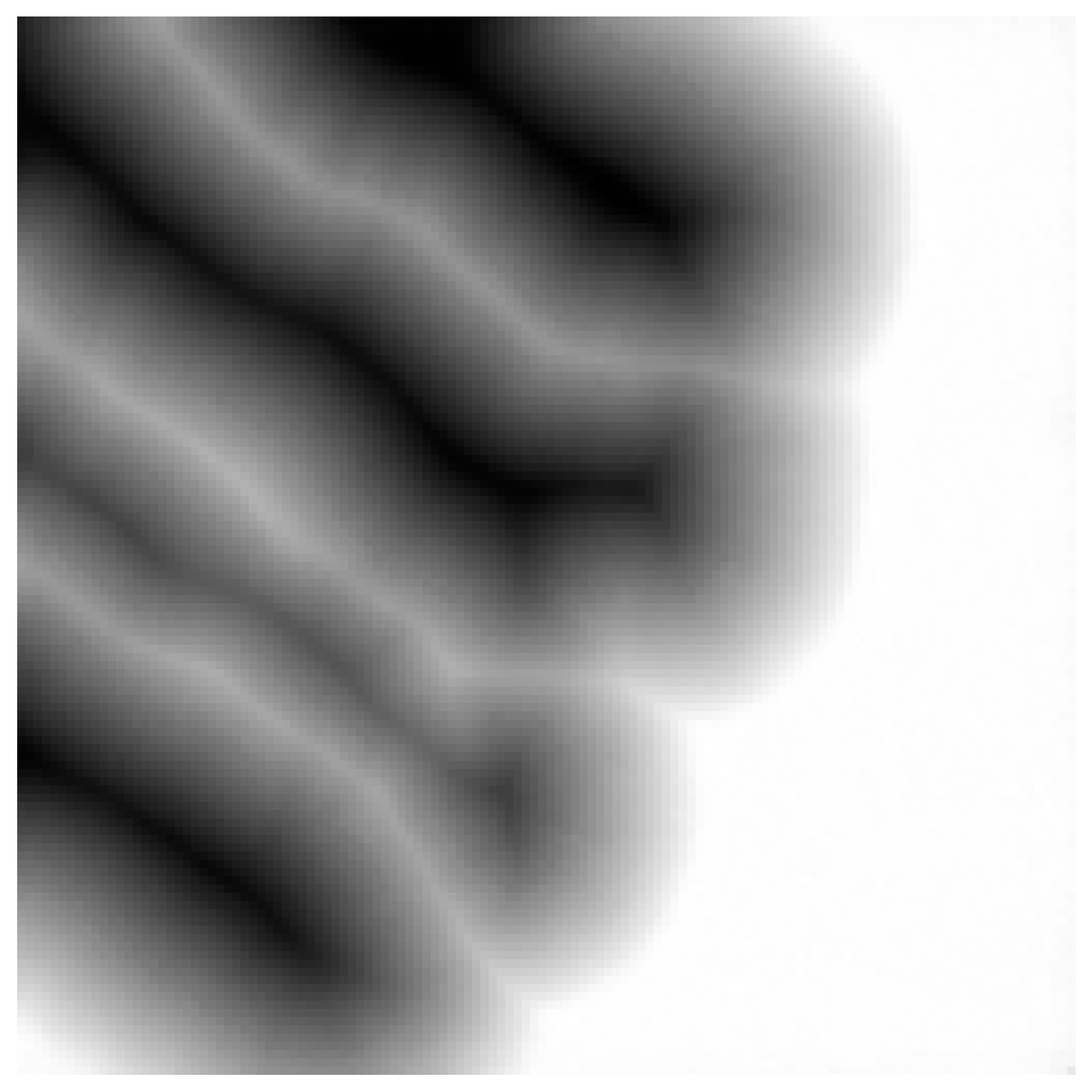}
        \end{minipage}
        \label{fig:glas2}
    }%
    \subfigure{%
        \begin{minipage}[c]{0.145\linewidth}
            \includegraphics[width=\linewidth]{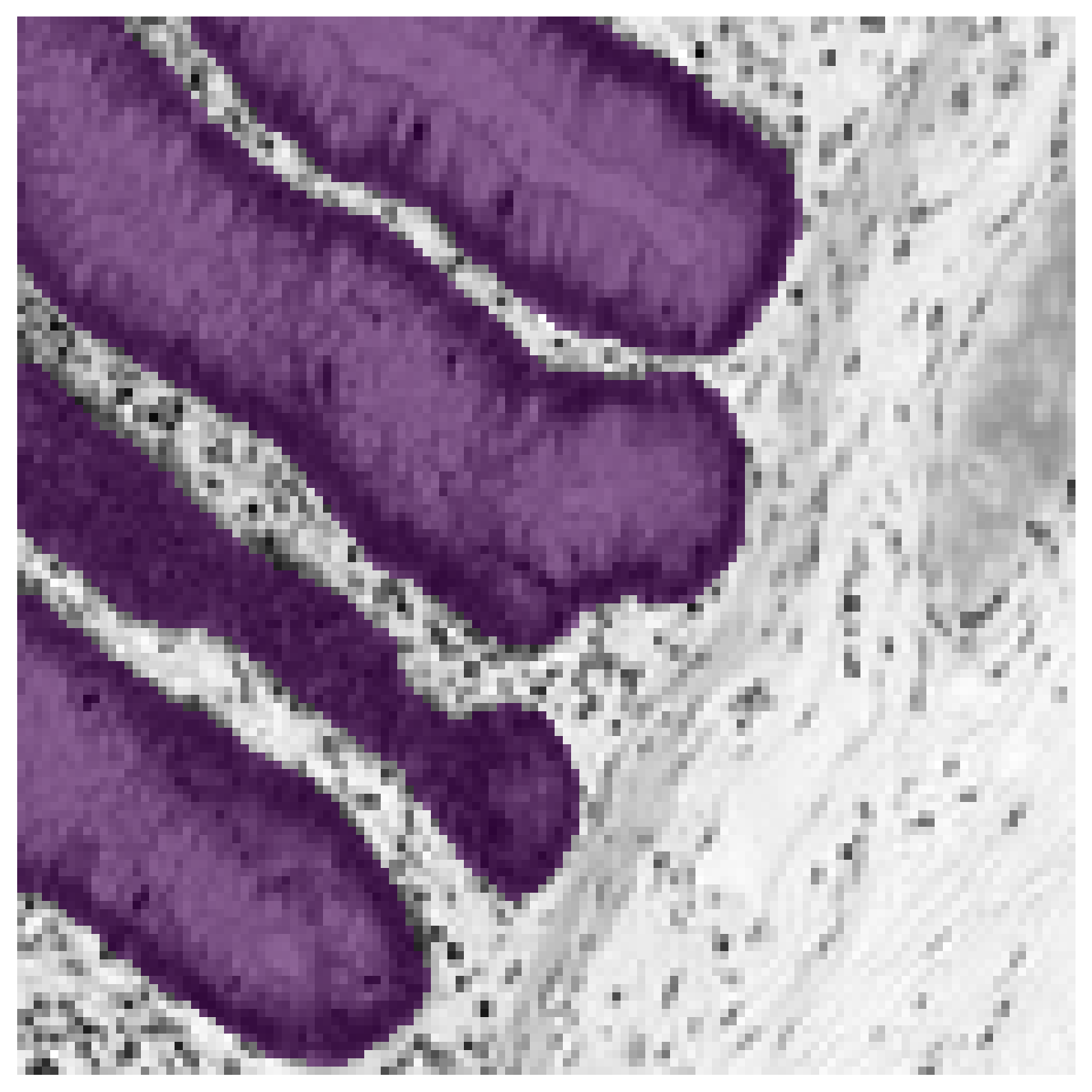}
        \end{minipage}
        \label{fig:glas3}
    }%
    \subfigure{%
        \begin{minipage}[c]{0.145\linewidth}
            \includegraphics[width=\linewidth]{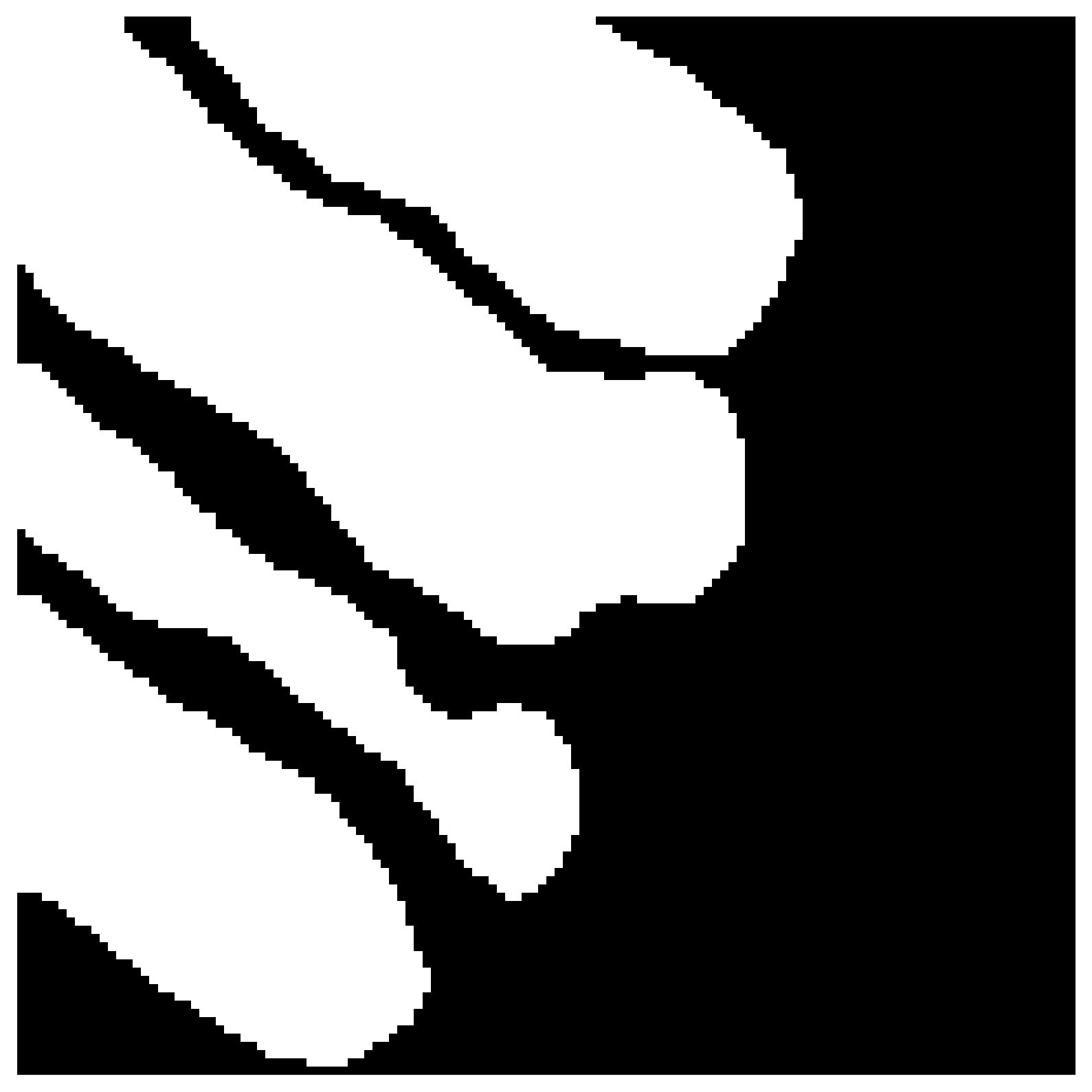}
        \end{minipage}
        \label{fig:glas4}
    }%
    \subfigure{%
        \begin{minipage}[c]{0.145\linewidth}
            \includegraphics[width=\linewidth]{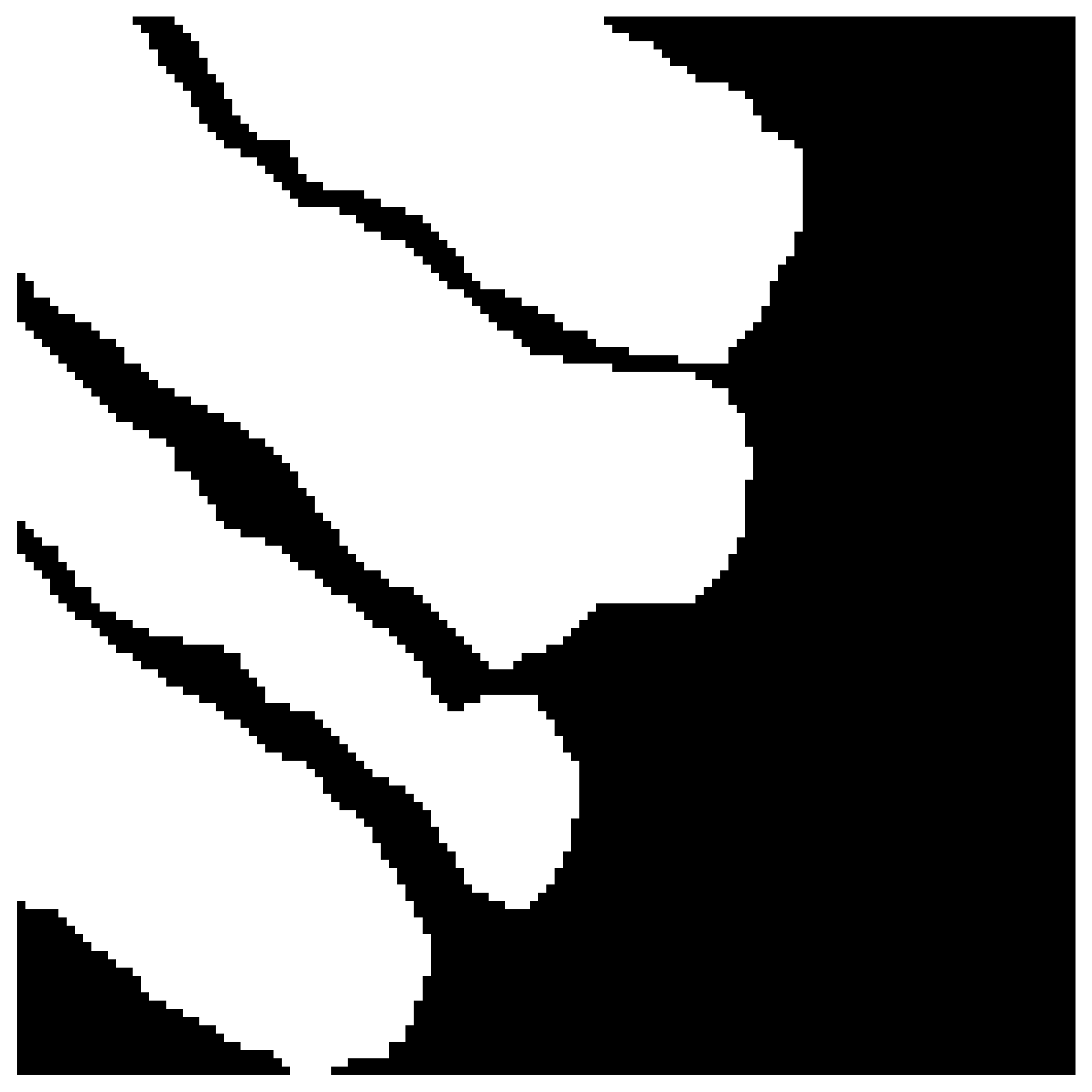}
        \end{minipage}
        \label{fig:glas5}
    }

    \vspace{0.5em}

    \begin{minipage}[c][0.12\textheight][c]{0.12\linewidth}
        \centering
        \textbf{\small MonuSeg}
    \end{minipage}%
    \subfigure{%
        \begin{minipage}[c]{0.145\linewidth}
            \includegraphics[width=\linewidth]{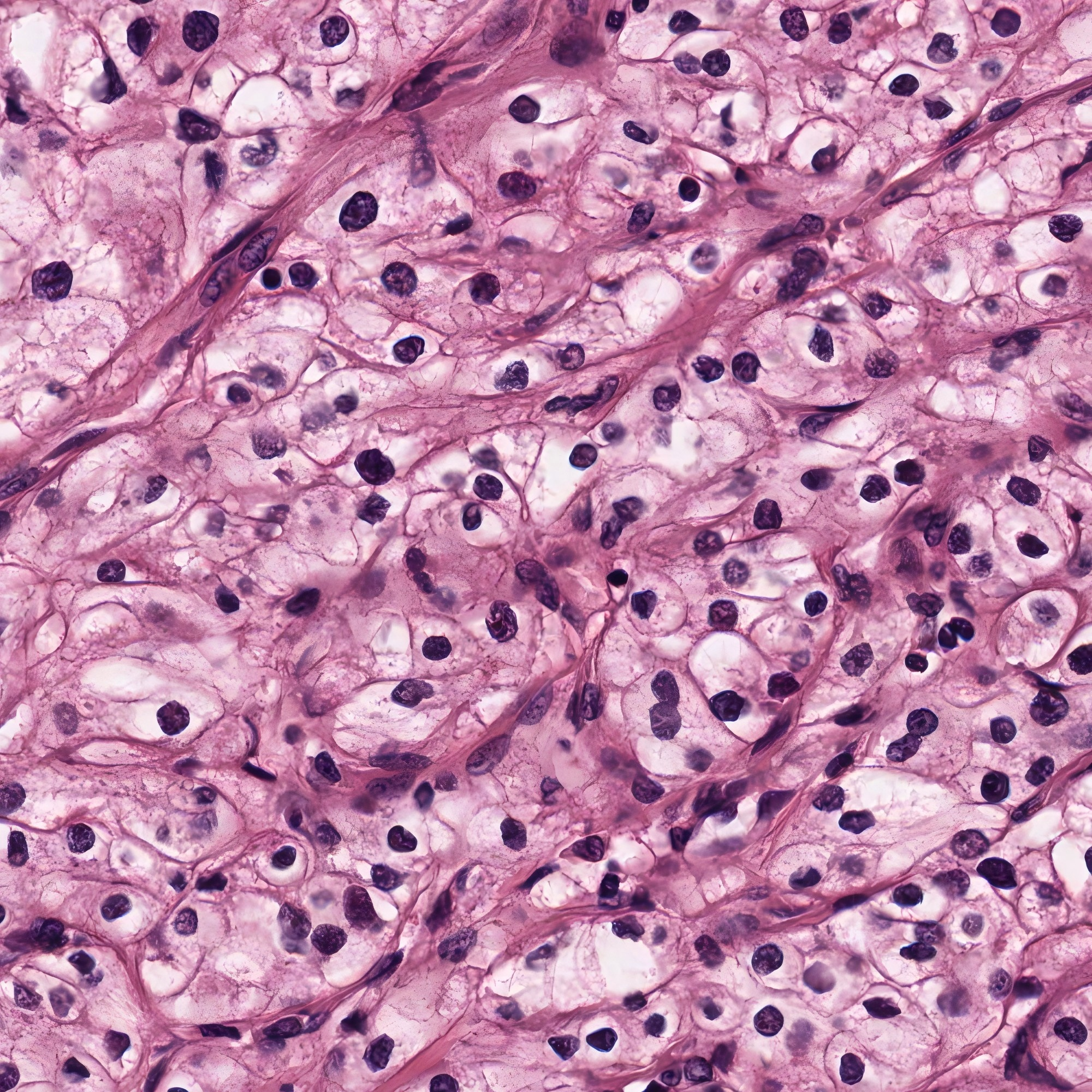}
        \end{minipage}
        \label{fig:monu1}
    }%
    \subfigure{%
        \begin{minipage}[c]{0.145\linewidth}
            \includegraphics[width=\linewidth]{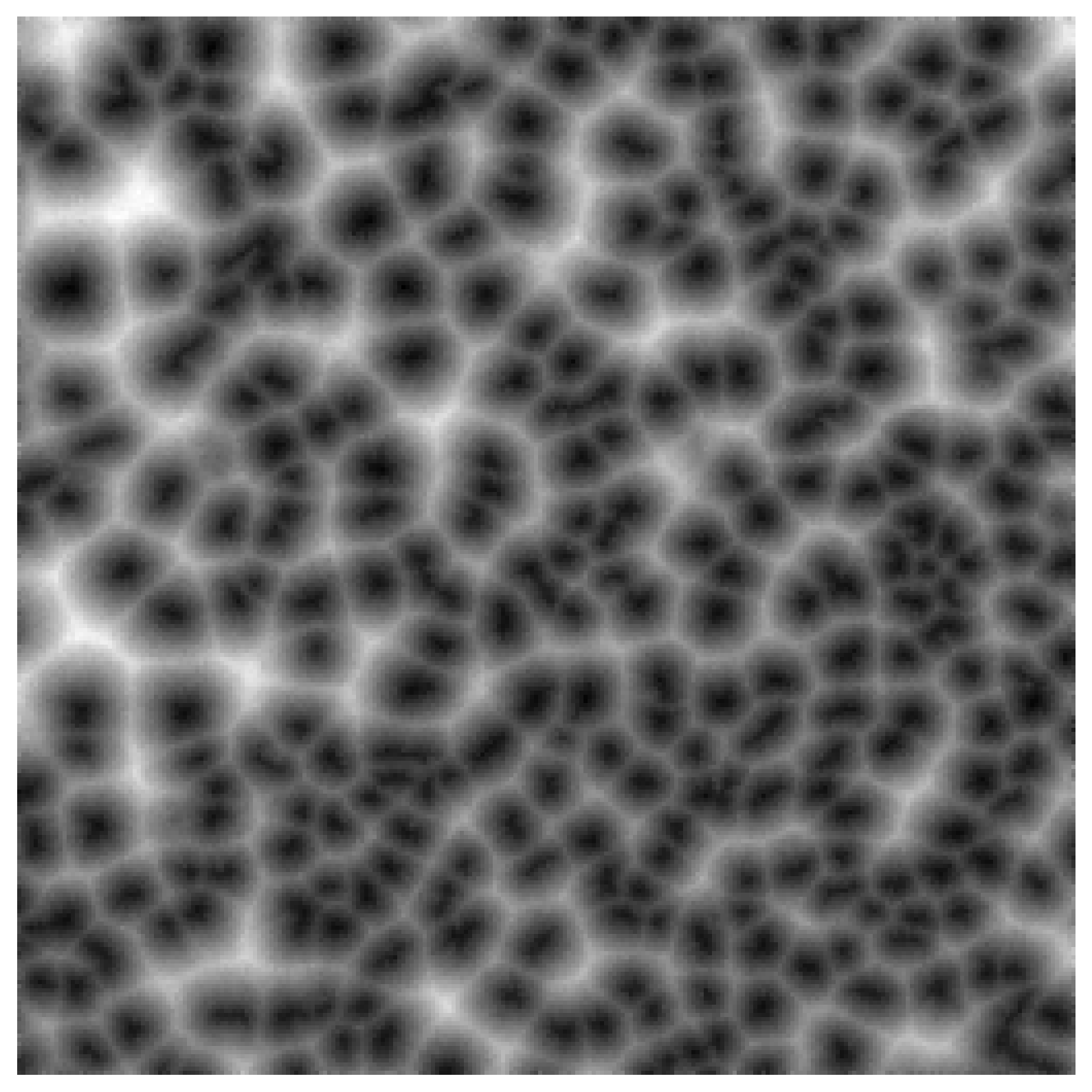}
        \end{minipage}
        \label{fig:monu2}
    }%
    \subfigure{%
        \begin{minipage}[c]{0.145\linewidth}
            \includegraphics[width=\linewidth]{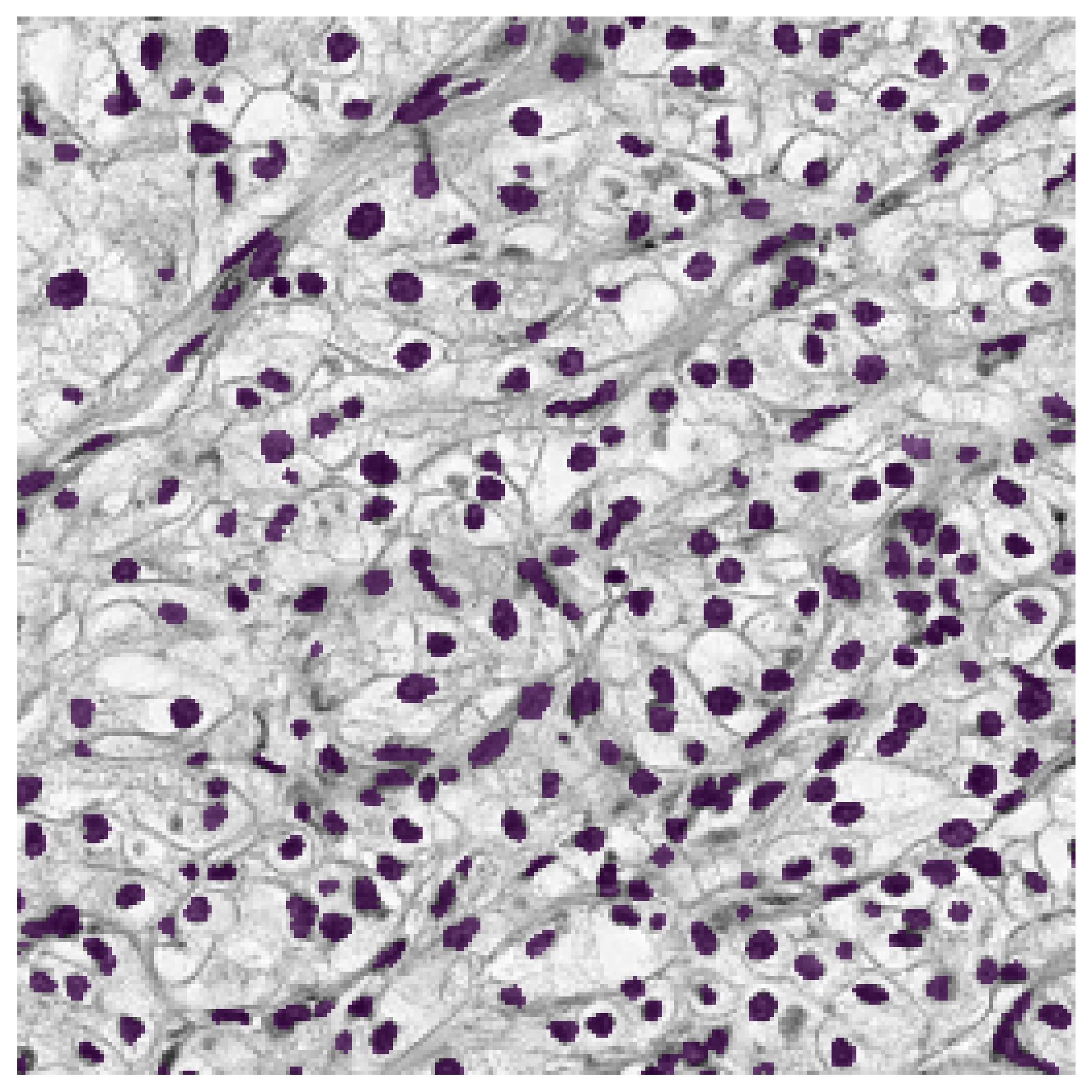}
        \end{minipage}
        \label{fig:monu3}
    }%
    \subfigure{%
        \begin{minipage}[c]{0.145\linewidth}
            \includegraphics[width=\linewidth]{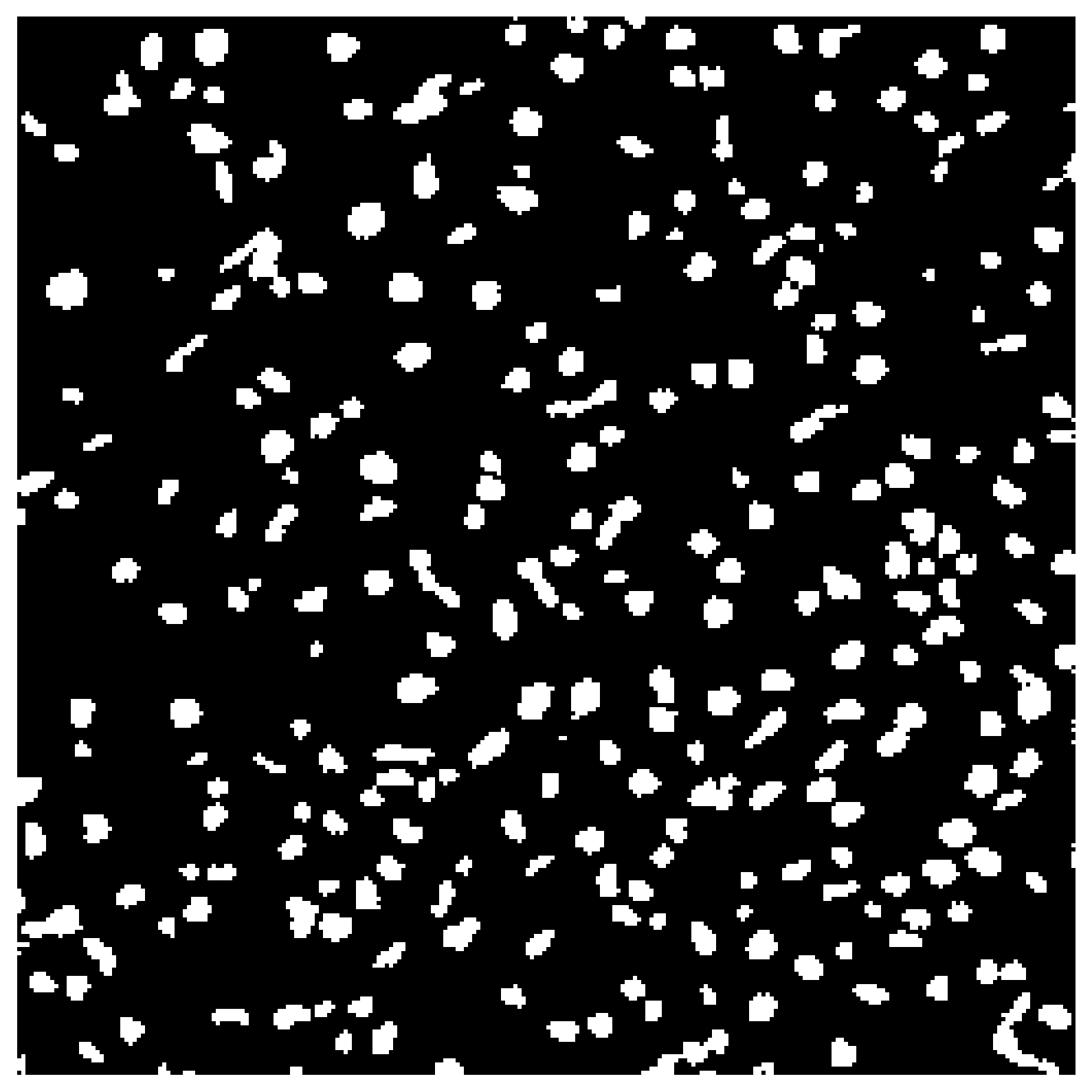}
        \end{minipage}
        \label{fig:monu4}
    }%
    \subfigure{%
        \begin{minipage}[c]{0.145\linewidth}
            \includegraphics[width=\linewidth]{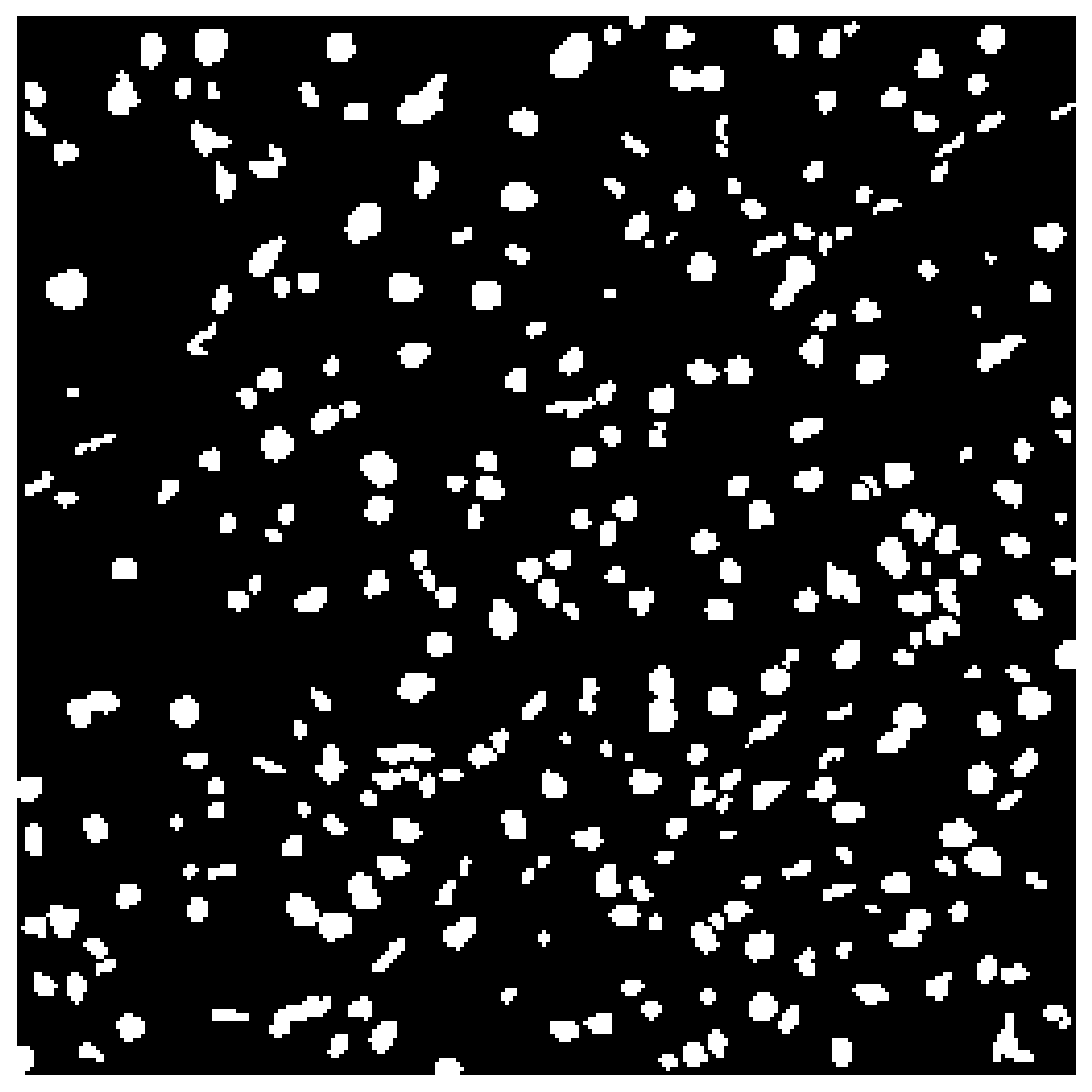}
        \end{minipage}
        \label{fig:monu5}
    }

    \caption{Qualitative results on DRIVE, GlaS, and MonuSeg datasets. 
    From left to right: Original Image, Predicted SDF Mask, Thresholded Prediction Overlay, Thresholded Prediction, and Ground Truth.}
    
    \label{fig:qualitative}
\end{figure}

To further examine the boundary localization capability of different methods on thin and low-contrast vessel structures, Figure~\ref{fig:drive-segmentation-comparison} presents a detailed qualitative comparison on a representative DRIVE image. In addition to the direct segmentation predictions, overlay images are included to show the alignment between the predicted vessels and the retinal image, while pixel-wise error maps explicitly distinguish false-positive and false-negative regions from correctly classified vessel and background pixels.

\begin{figure}[H]
    \centering
    \captionsetup{font=small,labelfont=bf}
    \captionsetup[subfigure]{font=small,labelfont=bf,skip=3pt}

    \newcommand{\resultsubfigure}[5]{%
        \begin{subfigure}[t]{0.188\textwidth}
            \centering
            \includegraphics[width=\linewidth,keepaspectratio]
            {#2}\par\vspace{2pt}
            \includegraphics[width=\linewidth,keepaspectratio]
            {#3}\par\vspace{2pt}
            \includegraphics[width=\linewidth,keepaspectratio]
            {#4}
            \caption{#1}
            \label{#5}
        \end{subfigure}%
    }

    \vspace{0.55em}

    \resultsubfigure
        {GT}
        {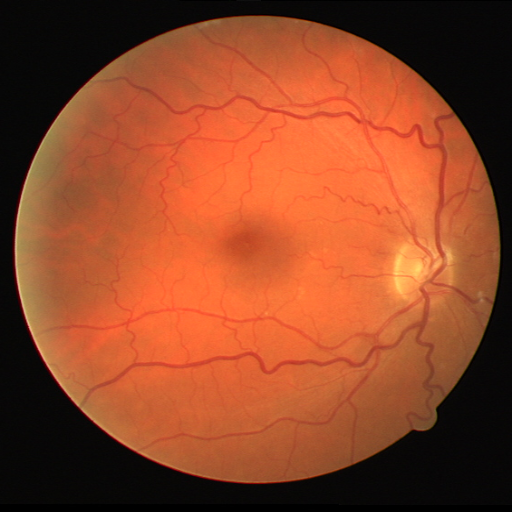}
        {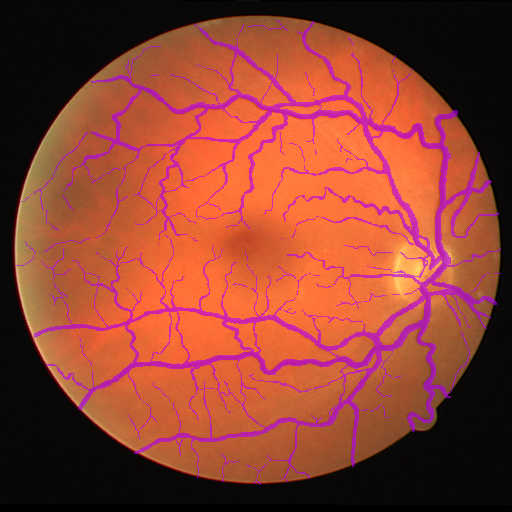}
        {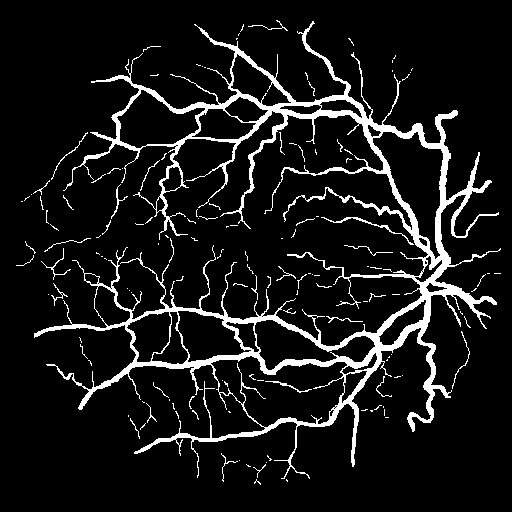}
        {fig:gt}
    \hfill
    \resultsubfigure
        {U-net}
        {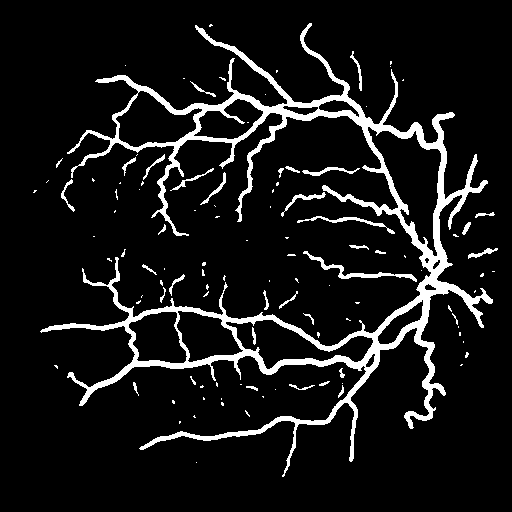}
        {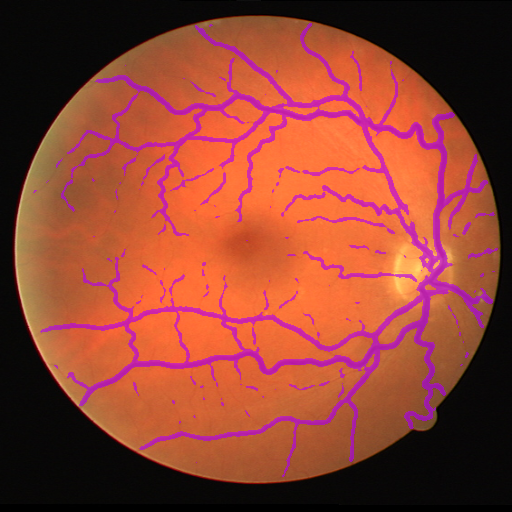}
        {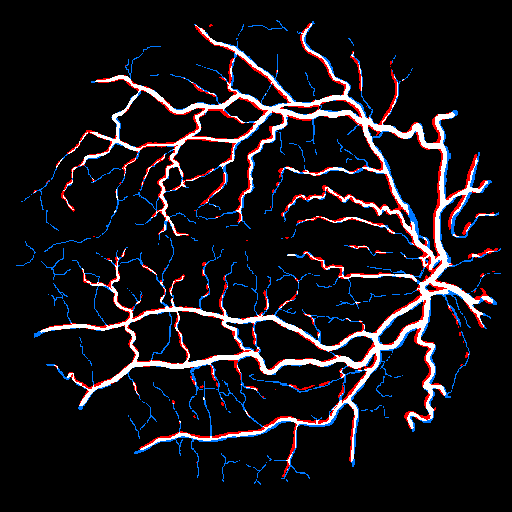}
        {fig:unet}
    \hfill
    \resultsubfigure
        {U-net++}
        {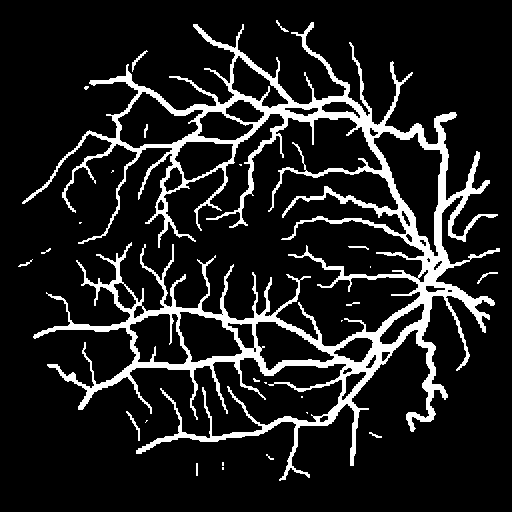}
        {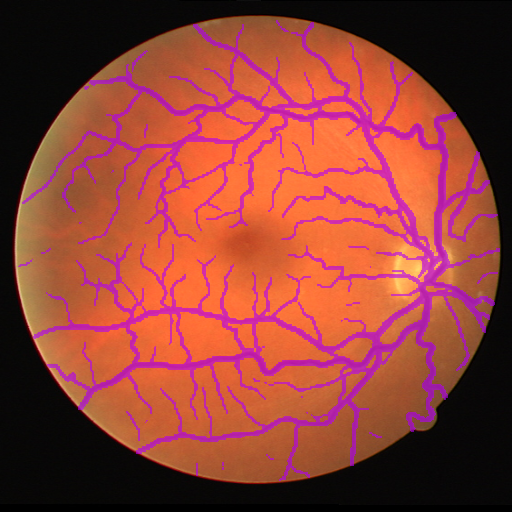}
        {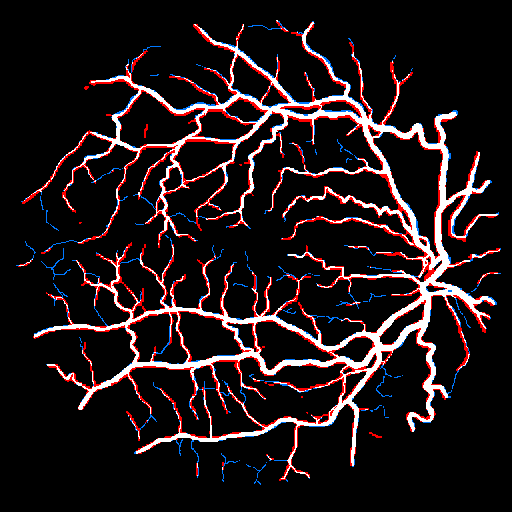}
        {fig:unetpp}
    \hfill
    \resultsubfigure
        {TransUNet}
        {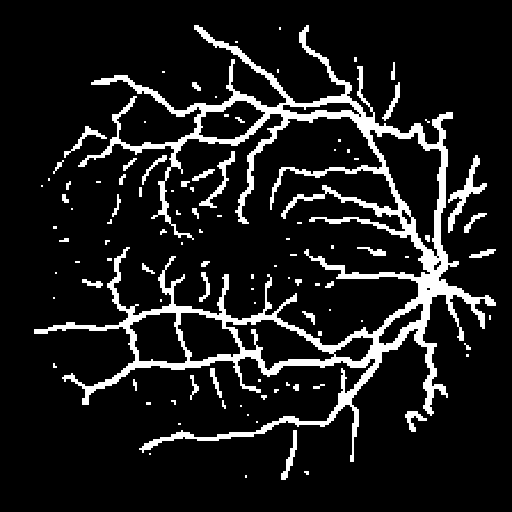}
        {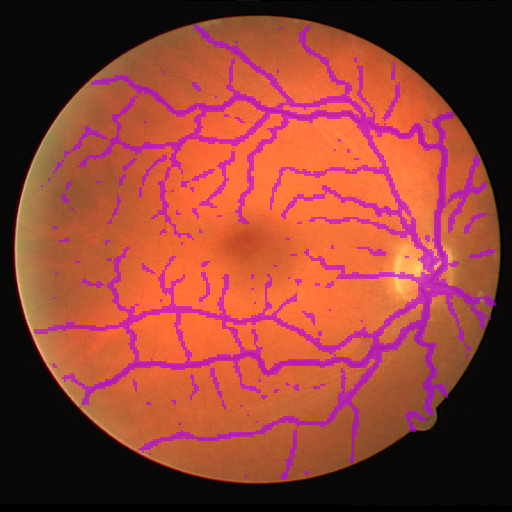}
        {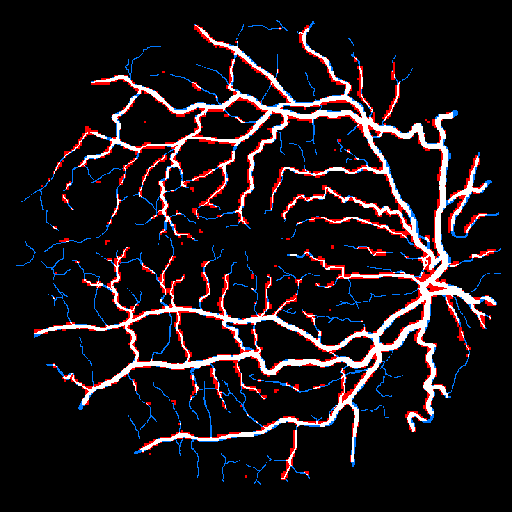}
        {fig:transunet}
    \hfill
    \resultsubfigure
        {Swin-Unet}
        {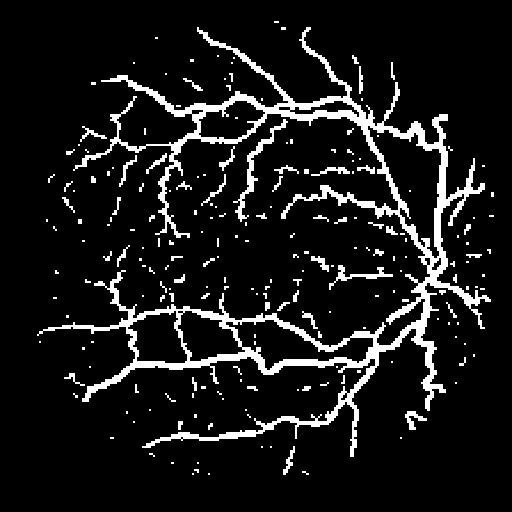}
        {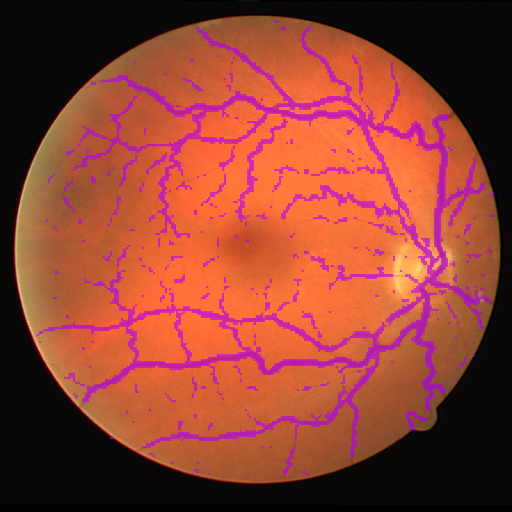}
        {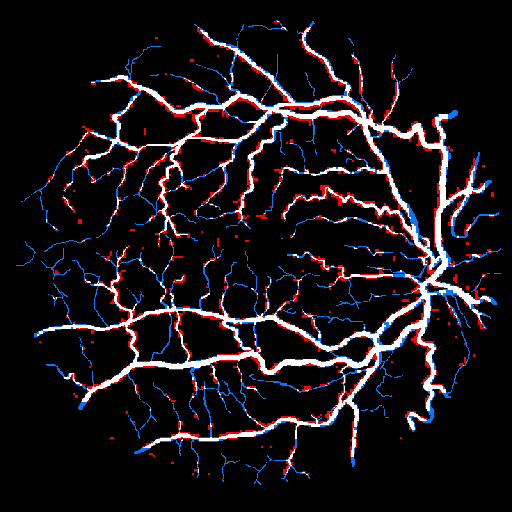}
        {fig:swinunet}

    \par\vspace{0.85em}

    \resultsubfigure
        {SegDiff}
        {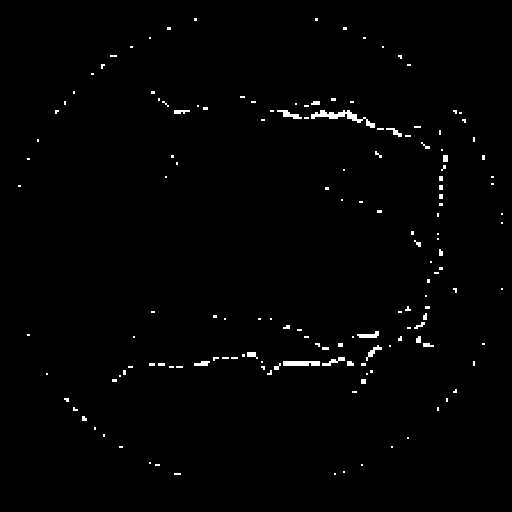}
        {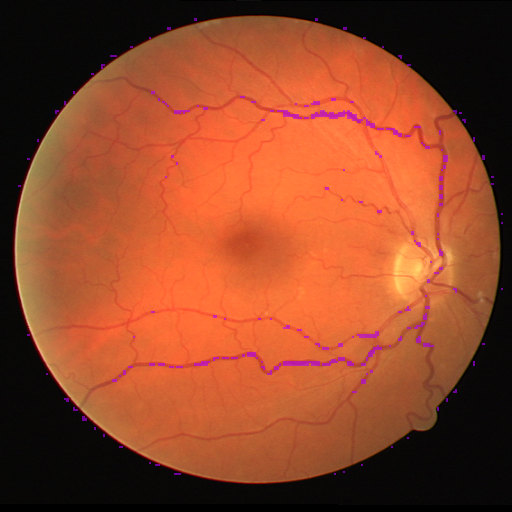}
        {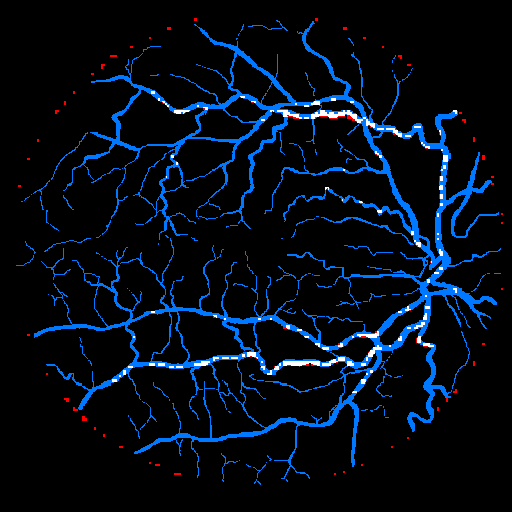}
        {fig:segdiff}
    \hfill
    \resultsubfigure
        {MedSegDiff}
        {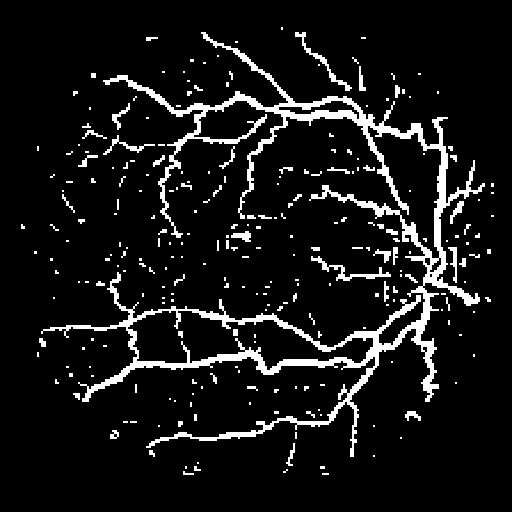}
        {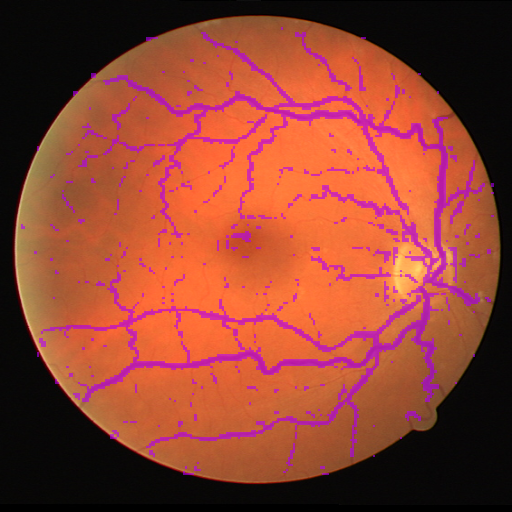}
        {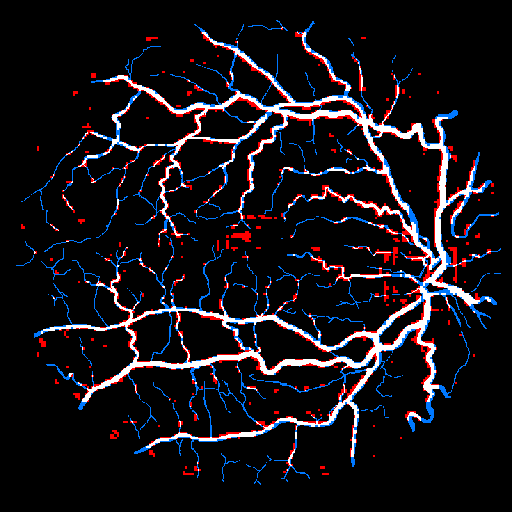}
        {fig:medsegdiff}
    \hfill
    \resultsubfigure
        {Topograph}
        {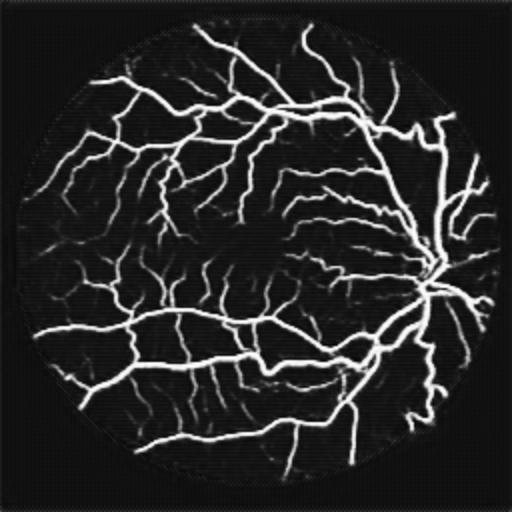}
        {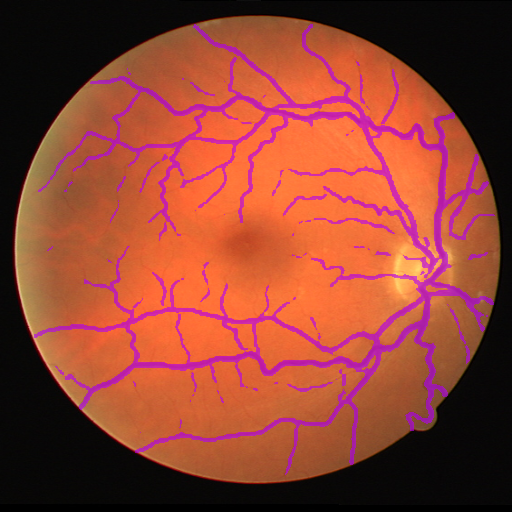}
        {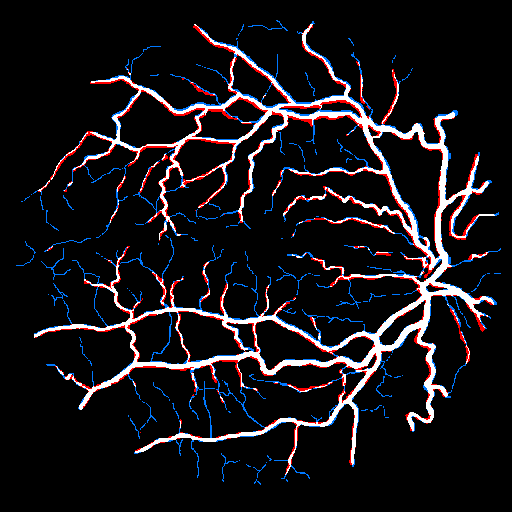}
        {fig:topograph}
    \hfill
    \resultsubfigure
        {FlowSDF}
        {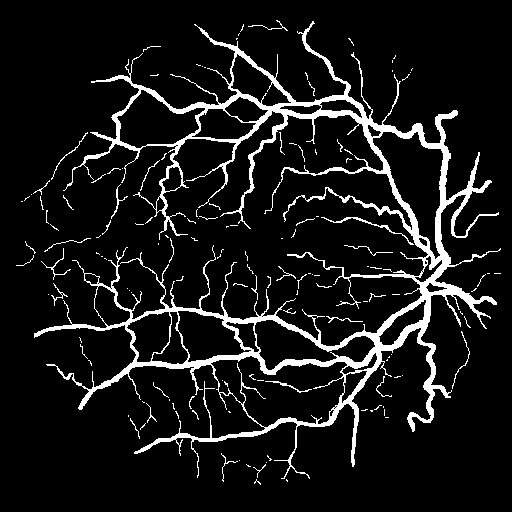}
        {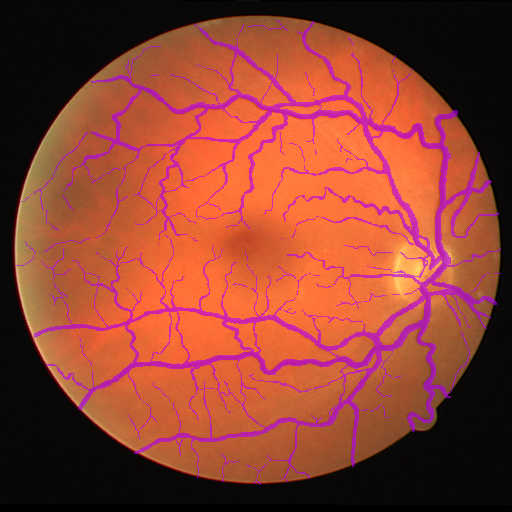}
        {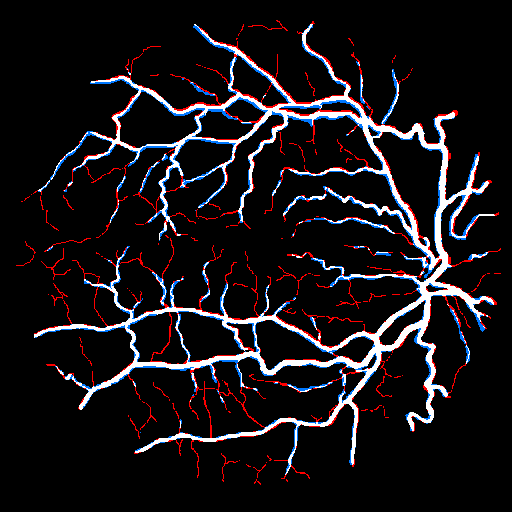}
        {fig:flowsdf}
    \hfill
    \resultsubfigure
        {Our Method}
        {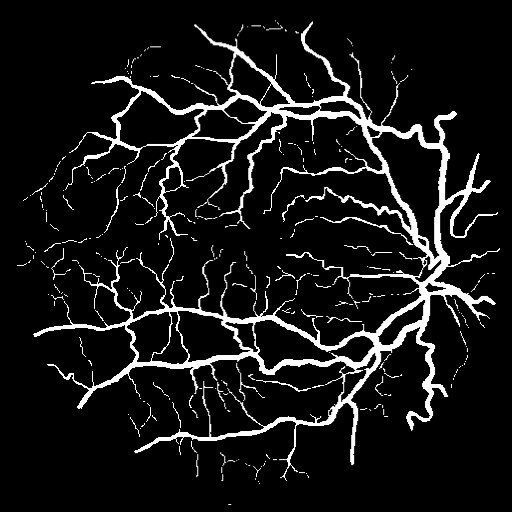}
        {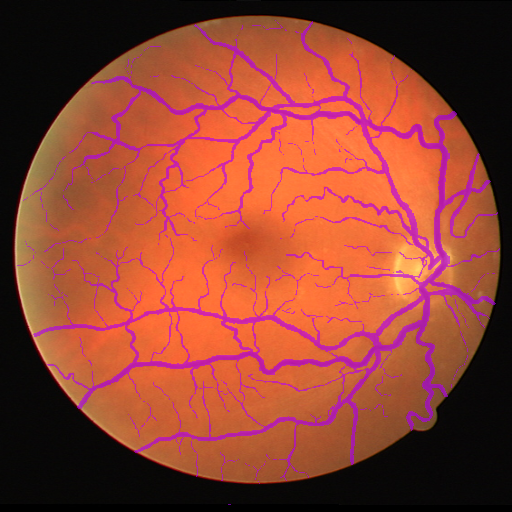}
        {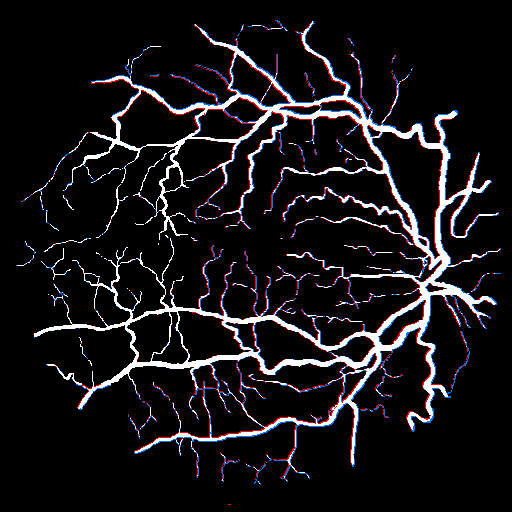}
        {fig:ours}

    \caption{%
        Qualitative comparison of retinal vessel segmentation on DRIVE, including predicted masks, image overlays, and pixel-wise error maps.
    }
    \label{fig:drive-segmentation-comparison}
\end{figure}

As shown in Figure~\ref{fig:drive-segmentation-comparison}, ten subfigures are arranged in two rows. Subfigure~(a) presents the reference images, including the original retinal image, the ground-truth overlay, and the ground-truth vessel mask from top to bottom. Subfigures~(b)-(j) show the results obtained by U-net, U-net++, TransUNet, Swin-Unet, SegDiff, MedSegDiff, Topograph, FlowSDF, and the proposed method, respectively. Within each method subfigure, the three images from top to bottom correspond to the predicted vessel mask, the prediction overlaid on the original image, and the pixel-wise error map relative to the ground truth. Purple denotes the predicted vessel regions in the overlay images. In the error maps, red and blue indicate false-positive and false-negative pixels, respectively, whereas white and black represent correctly classified vessel and background pixels. Several baseline methods produce fragmented thin vessels or fail to recover peripheral branches. In contrast, the proposed method in subfigure~(j) preserves more continuous vascular structures and exhibits fewer visible false-positive and false-negative regions, particularly around fine vessel branches. These qualitative observations are consistent with the quantitative DRIVE results reported in Table~\ref{tab:quantitative}, supporting the effectiveness of the geometric constraints in improving boundary continuity and stabilizing zero-level-set evolution for elongated tubular structures.
\section{Conclusion}
This paper presented an image-conditioned Flow Matching framework for medical image segmentation that represented masks as SDFs and incorporated high-order geometric constraints into continuous transport. Biharmonic regularization promoted smooth and structurally coherent implicit representations, while the Eikonal constraint preserved the distance-field property and improved boundary stability and shape fidelity during ODE-based evolution.
Experiments on MoNuSeg, GlaS, and DRIVE demonstrated competitive region-overlap and boundary-accuracy results. The ablation and NFE studies further showed that the geometric constraints reduced performance fluctuations and enabled stable segmentation with few integration steps, providing a favorable balance between accuracy and inference efficiency.
Potential future directions included extending the framework to multi-class segmentation, diverse imaging modalities, and larger clinical datasets, as well as exploring its combination with topology-preserving objectives and lightweight ODE solvers.


\end{document}